\definecolor{gray}{rgb}{.7, .7, .7}
\newcommand{\set}[1]{\{{#1}\}}
\newcommand{\gencauchy}[1]{\mathit{GenCauchy}({#1})}
\newcommand{\norm}[1]{\left|\!\left|{#1}\right|\!\right|}
\newcommand{\abs}[1]{\left|{#1}\right|}
\newcommand{\pr}[1]{\mathbf{Pr}[{{#1}}]}
\newcommand{\prpre}[1]{\mathbf{Pr}_{pre}[{{#1}}]}
\newcommand{\prpost}[1]{\mathbf{Pr}_{post}[{{#1}}]}
\newcommand{\noised}[1]{\mathcal{M}_{{#1}}}
\newcommand{\rndvar}[1]{\mathbf{{#1}}}
\newcommand{\fxpre}{f_X}
\newcommand{\fxprek}[1]{f_{{#1}}}
\newcommand{\fy}{f_Y}
\newcommand{\RR}{\mathbb{R}}
\newcommand{\ZZ}{\mathbb{Z}}
\newcommand{\NN}{\mathbb{N}}
\newcommand{\erf}[1]{\mathsf{erf}\left({#1}\right)}
\newcommand{\sensdata}{set of sensitive attributes}
\newcommand{\ub}[2]{UB^{{#1}}_{{#2}}}
\title{Interpreting Epsilon of Differential Privacy in Terms of Advantage in Guessing or Approximating Sensitive Attributes\thanks{This research was funded by the Air Force Research laboratory (AFRL) and Defense
Advanced Research Projects Agency (DARPA) under contract FA8750-16-C-0011. The
views expressed are those of the author(s) and do not reflect the official policy or position of
the Department of Defense or the U.S. Government.}}
\author{Peeter Laud and Alisa Pankova}
\institute{Cybernetica AS}
\begin{document}
\maketitle

\begin{abstract}
There are numerous methods of achieving $\epsilon$-differential privacy (DP). The question is what is the appropriate value of $\epsilon$, since there is no common agreement on a ``sufficiently small'' $\epsilon$, and its goodness depends on the query as well as the data. In this paper, we show how to compute $\epsilon$ that corresponds to $\delta$, defined as the adversary's advantage in probability of guessing some specific property of the output. The attacker's goal can be stated as Boolean expression over guessing particular attributes, possibly within some precision. The attributes combined in this way should be independent. We assume that both the input and the output distributions have corresponding probability density functions, or probability mass functions.
\end{abstract}

\section{Introduction}

Computing queries on statistical databases inevitably leaks something about the underlying data, even if only an aggregated output is observed. Differential privacy~\cite{DBLP:conf/icalp/Dwork06} (DP) quantifies privacy losses coming from such queries, and estimates accumulation of the losses if several queries are made to the same data. Roughly speaking, DP says that, if two databases are sufficiently similar, then the attacker should not be able (up to certain extent, parametrized by $\epsilon \geq 0$) to distinguish between them by observing the query output. If the outputs are numeric, DP is commonly achieved by adding calibrated noise to them. The smaller $\epsilon$ is, the larger is the noise magnitude, and the higher is privacy.

While cryptographic security parameters are in general well-understandable, it is not so easy with DP. In general, there is no upper bound on $\epsilon$, and we cannot tell in advance how small $\epsilon$ is ``good enough''. Suppose that we have come up with some fixed $\epsilon_0$ to conceal the change in a numeric attribute of an input table (let us denote it $x$) by $1$. The number $1$ is a very abstract value, and we need to know what it means in reality. For example, if $x$ is population size, then $1$ is a very small change that may be not worth of hiding, and if $x \in [0,1]$ is a probability, then $1$ would cover the entire space of inputs $x$. In the first case, we may want to enhance the privacy and require that e.g. change $\Delta = 1000$ should be untrackable. With respect to the new metric, we get exactly the same level of differential privacy for $\epsilon = \epsilon_0 \cdot 1000 = 0.001 \cdot \epsilon_0$. We see that we cannot define in advance ``sufficiently good'' $\epsilon$.

Intuitively, in the last example we have linked goodness of $\epsilon$ to the underlying metric, so we should consider $\epsilon$ together with the metric to define its goodness. We want to convert these two quantities to a single estimate that would be more standard and easier to interpret. In this paper, we show how to relate the metric and the $\epsilon$ to attacker's advantage in guessing a particular attribute.

\section{Related Work}

Differential privacy was first introduced by Dwork~\cite{DBLP:conf/icalp/Dwork06}. Numerous follow-up papers show how to achieve it for certain types of queries and database metric~\cite{DBLP:conf/pet/ChatzikokolakisABP13,DBLP:conf/birthday/ElSalamounyCP14,DBLP:conf/icdcit/ChatzikokolakisPS15,DBLP:journals/tods/KiferM14,DBLP:conf/sigmod/HeMD14,DBLP:conf/popl/EbadiSS15}. PINQ~\cite{DBLP:conf/sigmod/McSherry09} is probably the most famous example of a worked-out framework that provides privacy-preserving replies to database queries.

This work is mainly motivated by~\cite{DBLP:conf/isw/LeeC11}, which is one of the few papers mentioning the way of choosing appropriate $\epsilon$, relating it to the difference in attacker's prior (before observing the output) and posterior (after observing the output) knowledge about the input data. Since $\epsilon$ of DP may be difficult to interpret, there also exist alternative definitions of DP like Pufferfish privacy~\cite{pufferfish} or Differential identifiability~\cite{DBLP:conf/kdd/LeeC12}. In this paper, similarly to~\cite{DBLP:conf/isw/LeeC11}, we are staying within standard definition of DP and do not consider alternative definitions.

The results of~\cite{DBLP:conf/isw/LeeC11} have been briefly mentioned in~\cite{dersens}, being generalized from Laplace noise to arbitrary DP mechanisms, showing its potential usefulness to non-standard database metric. In this paper, we formalize these initial results, extend them, and provide technical details of applying it in practice. In general, we can use a system that supports non-standard metrics (like~\cite{dersens}) to estimate the amount of noise. For some particular cases like filtering by a discrete attribute, we get standard metric, so we can utilize obtained $\epsilon$ e.g. directly in PINQ system.

\section{Guessing Advantage and Differential Privacy}

\subsection{Definition of Guessing Advantage}\label{sec:ga}
We start from the attacker model of~\cite{DBLP:conf/isw/LeeC11}, where the attacker advantage is defined as the difference between its prior and posterior beliefs on the property that he is guessing. While~\cite{DBLP:conf/isw/LeeC11} is based on Laplace noise distribution, the analysis developed in~\cite{dersens} is based on generalized Cauchy distribution, for which it is more difficult to compute precise bounds. In this paper we work with more general bounds, which can be derived directly from the definition of differential privacy, without relying on the specific privacy mechanism.

First, let us give some definitions that we will use in this chapter.
\begin{definition}[Differential privacy]\label{def:dp}
Let $X$ be a metric space. Given $\epsilon \geq 0$, a mechanism $\noised{q}$ is $\epsilon$-differentially private if, for any $x,x' \in X$ such that $d(x,x') \leq d$,  and for any subset $Y \subseteq \noised{q}(X)$ of outputs, we have
\[\pr{\noised{q}(x) \in Y} \leq e^{d\epsilon}\cdot\pr{\noised{q}(x') \in Y}\enspace.\]
\end{definition}

\begin{definition}[Differential privacy w.r.t. component]\label{def:dpcomp}
Let $X = (X_1,\ldots,X_n)$ be a metric space. Given $\epsilon \geq 0$, a mechanism $\noised{q}$ is $\epsilon$-differentially private w.r.t. the component $X_j$ if, for any $x,x' \in X$ such that $d(x_j,x'_j) \leq d$, $x_i = x'_i$ for all $i \neq j$, and for any subset $Y \subseteq \noised{q}(X)$ of outputs, we have
\[\pr{\noised{q}(x) \in Y} \leq e^{d\epsilon}\cdot\pr{\noised{q}(x') \in Y}\enspace.\]
\end{definition}

While Def.~\ref{def:dp} is a general definition of differential privacy, Def.~\ref{def:dpcomp} is its particular instantiation on multidimensional data.

We are now ready to define the guessing advantage. Let $x \in X$ be the input. The attacker has a goal $g:\mathcal{S} \to \set{0,1}^{*}$, which defines the information that he \emph{wants to learn} about $x$. There is a function $e:\mathcal{S} \to \set{0,1}^{*}$ defining the information that the attacker \emph{already knows} about $x$. We want to know how the distribution of $g(X)$ changes after the attacker in addition gets the output $q(x)$ of a (differentially private) query $q:\mathcal{S} \to Y$.

The security definition is related to the difference between posterior (with observation $q(x)$) and prior (without observation $q(x)$) probabilities of guessing $g(X) = g(x)$, assuming that the extra information $e(x)$ is included as a condition of both prior and posterior probabilities.

\begin{definition}\label{def:gadisc}[Guessing advantage (discrete)]
Let $x \in X$ be the data instance, $\noised{q}$ the observation, $g$ the attacker goal, and $e$ the extra information. We say that advantage of guessing $g(x)$ is at most $\delta$ if, for any algorithm $A$,
\[\abs{\pr{A (\noised{q}(x),e(x)) = g(x)\ |\ Y} - \pr{A (e(x)) = g(x)\ |\ Y}} \leq \delta\]
where $Y = (\noised{q}(X) = \noised{q}(x)) \wedge (e(X) = e(x))$.
\end{definition}

For continuous data, the probability mass of each single point may be $0$. It is possible that $\pr{g(X) = g(x)} = 0$ for all $x$, so Def.~\ref{def:gadisc} does not make any sense. Indeed, in practice it can be equally bad if the attacker learns some geographical location "precisely" or "close enough", so we need to introduce the notion of precision. The attacker wants to come up with a point $x'$ such that $d(x,x') \leq r$ for a sufficiently small $r$.

\begin{definition}\label{def:gacont}[Guessing advantage (continuous)]
Let $x \in X$ be the data instance, $\noised{q}$ the observation, $g$ the attacker goal, and $e$ the extra information. Let $\mathcal{B}(x,r) = \set{x' | d(x,x') \leq r}$.  We say that advantage of guessing $g(x)$ with precision $r$ is at most $\delta$ if, for any algorithm $A$,
\[\abs{\pr{A (\noised{q}(x),e(x)) \in \mathcal{B}(g(x),r)\ |\ Y} - \pr{A (e(x)) \in \mathcal{B}(g(x),r)\ |\ Y}} \leq \delta\]
where $Y = (\noised{q}(X) = \noised{q}(x)) \wedge (e(X) = e(x))$.
\end{definition}

While the algorithm $A$ can in principle be arbitrary, we consider an attacker who knows the distribution of inputs, i.e. attacker's prior belief is equivalent to the actual distribution of inputs. If he only gets the extra knowledge $e(x)$, he estimates the probability weight of "sufficiently correct answers" $\mathcal{B}(g(x),r)$ as $\prpre{\mathcal{B}(g(x),r)} := \pr{X \in \mathcal{B}(g(x),r)\ |\ e(X) = e(x)}$, i.e. the \emph{prior} probability. If he gets in addition the output $\noised{q}(x)$, the confidence is $\prpost{\mathcal{B}(g(x),r)} := \prpre{\mathcal{B}(g(x),r)\ |\ \noised{q}(X) = \noised{q}(x), e(X) = e(x)}$, i.e. the \emph{posterior} probability.

\subsection{From Guessing Advantage to Epsilon}\label{sec:gadp}

Let the posterior belief of the adversary be expressed by the probability distribution $\prpost{\cdot}$. Let the initial distribution of $X$ be $\prpre{\cdot}$, and $\fxpre$ the corresponding probability density function (PDF), i.e. $\prpre{X'} = \int_{X'} \fxpre(x) dx$ for $X' \subseteq X$. Let $\fy$ be the PDF of the outputs of $\fy$. Note that existence of $\fxpre$ and $\fy$ is required for the further analysis. We want to compute an upper bound on $\prpost{X'}$ for some $X' \subseteq X$.

We have $\prpost{X'} := \pr{x \in X'\ |\ \noised{q}(x) = y}$, where $y$ is the output that the attacker observed. Let $\int_{A}$ denote a Lebesgue integral over a subset $A$ of some metric space. Similarly to~\cite{DBLP:conf/isw/LeeC11} and ~\cite{dersens}, we rewrite the posterior probability as
\begin{eqnarray*}
\prpost{X'} & = & \int_{X'} \fxpre(x'\ |\ \noised{q}(x') = y)\ dx'\\
& = & \int_{X'} \frac{\fy(y\ | x')\fxpre(x')}{\fy(y)}\ dx' = \frac{\int_{X'} \fy(y\ | x')\fxpre(x')\ dx'}{\int_{X} \fy(y\ | x)\ \fxpre(x)\ dx}\\
& = &  \frac{\int_{X'} \fy(y\ | x')\fxpre(x')\ dx'}{\int_{X'} \fy(y\ | x)\ \fxpre(x)\ dx + \int_{X\setminus X'} \fy(y\ | x)\ \fxpre(x)\ dx}\\
& = & \frac{1}{1 + \frac{\int_{X\setminus X'} \fy(y | x) \fxpre(x) dx}{\int_{X'} \fy(y | x') \fxpre(x') dx'}}\enspace.
\end{eqnarray*}

Let us assume that $R := \sup_{x \in X}{d(x,x')}$ exists. Differential privacy gives us $\frac{\pr{\noised{q}(x') \in Y}}{\pr{\noised{q}(x) \in Y}} \leq e^{\epsilon\cdot R}$ for all $Y$, and hence also $\frac{\fy(y | x')}{\fy(y | x)} \leq e^{\epsilon\cdot R}$. We get
\[\prpost{X'} \leq \frac{1}{1 + e^{-\epsilon R}\cdot\frac{\prpre{X\setminus X'}}{\prpre{X'}}}\enspace.\]
Hence, if we want $\prpost{X'} \leq \prpre{X'} + \delta$, we need to take
\[\epsilon \leq \epsilon_{lb} := \frac{-\ln\left(\frac{\prpre{X'}}{\prpre{X\setminus X'}} \cdot (\frac{1}{\delta + \prpre{X'}} - 1)\right)}{R}\enspace.\]

So far, we have shown that $\prpost{X'} \leq \prpre{X'} + \delta$. To satisfy Def.~\ref{def:ga}, we also need $\prpost{X'} \geq \prpre{X'} - \delta$, or $\prpost{X \setminus X'}) \leq \prpre{X \setminus X'} + \delta$, to show that the distribution has not changed much in overall compared to prior. The derivation of $\epsilon$ for the lower bound is analogous to the upper bound. Substituting $\prpre{X'}$ with $\prpre{X \setminus X'}$, we get a bound

\[\epsilon \leq \epsilon_{ub} := \frac{-\ln\left(\frac{\prpre{X\setminus X'}}{\prpre{X'}} \cdot (\frac{1}{\delta + \prpre{X\setminus X'}} - 1)\right)}{R}\enspace.\]

To satisfy $\abs{\prpost{X'} - \prpre{X'}} \leq \delta$, we eventually need to take

\[\epsilon = \min(\epsilon_{lb},\epsilon_{ub})\enspace.\]

Note that, if $p \leq 1 - p$, then $\epsilon_{lb} \leq \epsilon_{ub}$, and if $1 - p \leq p$, then $\epsilon_{ub} \leq \epsilon_{lb}$. Hence, we do not need to compute both bounds.

\paragraph{Constraining the search space}

The problem of this approach is that $R$ can be very large in practice, or even not exist. Note that we are essentially trying to prove that the elements of $X'$ are ``sufficiently indistinguishable'' from the elements of $X\setminus X'$. In practice, it may be sufficient to take just a subset of $\hat{X'} \subseteq X\setminus X'$ and show that it is difficult to distinguish $X'$ and $\hat{X'}$. We have
\begin{eqnarray*}
\prpost{X'} & = & \frac{1}{1 + \frac{\int_{X\setminus X'} \fy(y | x) \fxpre(x) dx}{\int_{X'} \fy(y | x') \fxpre(x') dx'}} \leq \frac{1}{1 + \frac{\int_{\hat{X'}} \fy(y | x) \fxpre(x) dx}{\int_{X'} \fy(y | x') \fxpre(x') dx'}}
\end{eqnarray*}
for any $\hat{X'} \subseteq X \setminus X'$. Let $\hat{X'_a} := \set{x\ |\ d(x,x') \leq a} \setminus X'$ for some $a \in \RR$. We get
\begin{equation}\label{eq:main}
\epsilon \leq \frac{-\ln\left(\frac{\prpre{X'}}{\prpre{\hat{X'_a}}} \cdot (\frac{1}{\delta + \prpre{X'}} - 1)\right)}{a}\enspace.
\end{equation}

If $R := \sup_{x \in X, x' \in X'}{d(x,x')}$ does not exist, then we may instead take $a$ such that $\pr{x\ |\ \forall x' \in X':\ d(x,x') \leq a} \approx 1$, which is useful e.g. when the input comes from normal distribution. It is possible that there are better candidates for $a$. A method to choose optimal $a$ for a uniformly distributed attribute is given in App.~\ref{app:choosinga}, but we do not have a generic approach for this. In practice, if there is $R$ such that $\pr{x\ |\ \forall x' \in X':\ d(x,x') \leq R} \approx 1$, since computing $\epsilon$ from $a$ is a cheap operation, we may sample several different values of $a$ from the interval $(0, R]$, and take one for which $\epsilon$ is the largest, i.e. noise is the smallest.

\section{Instantiation to databases}\label{sec:dbga}

As in~\cite{dersens}, we assume that a database is an element of a Banach space $X = (X_1,\ldots,X_n)$. In general, $n$ is the total number of variables defining the database. While it seems the most intuitive to think that each $X_i$ corresponds to some cell of a table in the database, it is possible that $X_i$ in turn denotes a \emph{subspace} of variables, i.e. a row or a column.

Let $q: X \to Y$ be a query. Let the noised query $\noised{q}$ be differentially private variant of $q$. Let $X = (X_1,\ldots,X_n)$, and let $X_i$ be the component that the attacker wants to guess (e.g. a subset of table columns). Let us constrain Def.~\ref{def:gacont} and take $g(x) = x_k$ and $e(x) = (x_1,\ldots,x_{k-1},x_{k+1},\ldots,x_n)$. Defining guessing advantage w.r.t. component, will be related to Def~\ref{def:dpcomp} of DP w.r.t. component.

\begin{definition}\label{def:ga}[Guessing advantage w.r.t. component]
Let $x = (x_1,\ldots,x_n) \in (X_1,\ldots,X_n)$ be the data instance, $\noised{q}(x)$ the observation. We say that advantage of guessing $x_k$ with precision $r_k$ is at most $\delta$ if
\[\abs{\pr{\mathcal{B}(x_k,r_k)\ |\ \noised{q}(X) = \noised{q}(x), \bigwedge_{i \neq k} (X_i = x_i)} - \pr{\mathcal{B}(x_k,r_k)\ |\ \bigwedge_{i \neq k} (X_i = x_i)}} \leq \delta\enspace,\]
where $\mathcal{B}(x_k,r_k) = \set{x'_k | d(x_k,x'_k) \leq r}$.
\end{definition}

Formally, there is a collection $\mathcal{S} = \set{S_k\ |\ k \in [|\mathcal{S}|]}$ of sets of sensitive attributes that the attacker wants to guess. Each $S_k \in \mathcal{S}$ is of the form $S_k = \set{(x_{k_1}, r_{k_1}),\ldots,(x_{k_{n_k}},r_{k_{n_k}})}$, denoting the set of variables $x_i$ that are not allowed to be disclosed to the attacker with precision at least $r_i$. All the attributes of $S_k$ define one disclosure, i.e., the attacker needs to guess \emph{all} $x_i$ in order to win.

\subsection{The Univariate Case}\label{sec:attadv:univar}

First, we show how to compute the $\epsilon$ for a single set $S_k \in \mathcal{S}$, where $S_k = \set{(t,r)}$ is univariate. The set of ``sufficiently correct guesses'' is defined as $X' = \set{x'\ |\ d(t,x') \leq r}$. Let $R$ be such that $\pr{x\ |\ \forall x' \in X':\ d(x,x') \leq R} \approx 1$.

As discussed in Sec.~\ref{sec:gadp}, to achieve $\prpost{X'} \leq \prpre{X'} + \delta$, we can come up with $a \leq R$ defining $\hat{X'_a} := \set{x\ |\ \forall x' \in X':\ d(x,x') \leq a} \setminus X'$. We get $\hat{X'_a} := \set{x\ |\ r < d(x,x') \leq a}$. It is more complicated with defining $a$ for $\prpost{X'} \geq \prpre{X'} - \delta$, as the elements inside $X\setminus X'$ can already be $R$ units apart from each other.

It can be reasonable to adjust Def.~\ref{def:ga} and deliberately reduce the search space of the attacker, taking $X := X' \cup \hat{X}_a$. Formally, we increase the knowledge of the attacker and add to $e(x)$ the information that the input is located inside $\hat{X}_a$. Note that we do not rescale the prior $\prpre{X'}$, but just ``squeeze'' the remaining probability weight so that $\prpre{\hat{X}_a} := \prpre{X \setminus X'}$. As the result, we now only claim that the prior and the posterior distributions are different for $x$ such that $d(x,t) \leq a$. This allows to consider distances up to $a$. The smaller $a$ is, the closer is $\prpost{X'}$ to $1$.

It remains to show how the values $\prpre{X'}$ and $\prpre{\hat{X}_a}$ are actually computed. Let $g$ be a function such that $g(z) := \prpre{x\ |\ d(x,t) \leq z}$. The input $t$ is implicitly included into description of $g$. We have $\prpre{X'} = g(r)$ and  $\prpre{\hat{X'_a}} = g(a) - g(r)$. Let us investigate some special cases of $g(z)$. We also see whether we are able to make any reasonable analysis if prior probabilities are not known in advance, and we look more precisely at discrete datasets.

\paragraph{Uniform distribution.}

Given a radius $r$ and an upper bound $R$ on $d(x,x')$, it is easy to compute $\prpre{X'} = \frac{2r}{R}$. For $a < R$, we can as well take $g(a) = \frac{2a}{R}$. This is true of we can perfectly fit $\hat{X'_a}$ into $X$. The problem is that the input $t$ may be located not in the ``center'' of the space $X$, but somewhere in the corner, as shown in Figure~\ref{fig:corner}. It may happen that we cannot come up with a set $\hat{X'_a}$ with probability weight $\frac{2(a - r)}{R}$ that would satisfy $d(x,x') \leq a$ for all $x \in \hat{X'_a}$ and $x' \in X'$, and need to include more distant points into $\hat{X'_a}$. We may have distance up to $d(x,x') \leq 2a$ for $x' \in X'$, $x \in \hat{X'_a}$, which means that $2$ will go to the exponent of $e$. To avoid the change in exponent, we can just take $a' := 2a$, getting $\prpre{X'} = \frac{2r}{R}$ and $g(a') = \frac{a'}{R}$.

\begin{wrapfigure}[11]{R}{.33\textwidth}
\center
\input{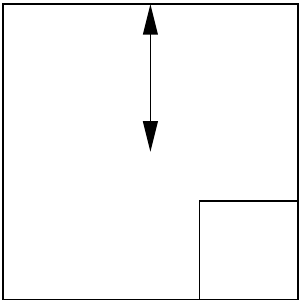_t}
\caption{Bad location of $X'$}\label{fig:corner}
\end{wrapfigure}

\paragraph{Distributions with well-defined cumulative distribution function (CDF).}

We can estimate noise for normally distributed inputs, and in general for data with well-defined CDF $F(x)$. Namely, since we define $g(z) = \set{x | d(x,t) \leq z}$, then by definition of CDF we get $X' = F(x + z) - F(x - z)$.

If we are not given an upper bound $R$ on $X$, then we can still derive some $R$ based on the distribution. If the distribution has bell shape, it does not make sense to consider elements that are too far from the center. For example, for normal distribution $N(\mu,\sigma^2)$ we can take $R = \mu + 3\cdot\sqrt{2}\sigma$, which covers $\erf{3} \approx 0.9999779$ of the input space and is more than enough.

\paragraph{The worst prior probability.}

Even if the input distribution is unknown, we can still define $\prpre{X'}$ in such a way that the maximum noise will be needed, so that our analysis would be valid for any possible prior. This value is non-trivial, e.g. although $p = 1.0$ increases posterior probability the most, the advantage would be $0$. We want to find $p$ that maximizes
\[\epsilon =: f(p) = \frac{-\ln\left(\frac{p}{1-p} \cdot (\frac{1}{\delta + p} - 1)\right)}{R}\enspace.\]

We can do it by using common calculus. We have
\[\frac{df}{dp}(p) = \left(\frac{1}{(p+\delta) - (p+\delta)^2} - \frac{1}{(1 - p) - (1 - p)^2}\right)\cdot\frac{1}{R}\enspace,\]
\noindent and we have $\frac{df}{dp}(p) = 0$ for $p = \frac{1 - \delta}{2}$.

Is it reasonable to try out $a < R$ if the input distribution is unknown? In general, if we have no additional knowledge, we only know that $g(R) = 1$. Alternatively, if we do know that $g(a) = q$ for some $q$, then we can take $q - p$ instead of $1 - p$. For example, even if we do not know details of the input distribution, if it is known that most of the data stays within some $R' < R$, and $\prpre{x\ |\ d(t,x) \leq R'} = q$, it can be useful to give take $a := R'$.

The question is how to choose the worst-case $p$. Substituting $1 - p$ with $g(a) - p$, we get

\[\frac{df}{dp}(p) = \left(\frac{1}{(p+\delta) - (p+\delta)^2} - \frac{1}{(q - p) - (q - p)^2}\right)\cdot\frac{1}{R}\enspace,\]
\noindent and we have $\frac{df}{dp}(p) = 0$ for $p = \frac{\delta(1-\delta) + q(1-q)}{2(\delta + q - 1)}$. This is in general different from the $q=1$ case. However, note that in practice it is fine to consider only the case $q = 1$. If we have found $p$ that minimizes $\epsilon$ in the case $a = R$, this $\epsilon$ is in any case sufficient. The aim of taking $a < R$ is only to check whether it is possible to come up with a better $\epsilon$. Hence, we can always take $p = \frac{1-\delta}{2}$.

\paragraph{A discrete set of prior probabilities.}

Suppose that the possible value of $x$ may come from the set of values $\set{x_1,\ldots,x_n}$ with probabilities $\set{p_1,\ldots,p_n}$ respectively ($\sum_{i=1}^{n}p_i = 1$). Here $x$ is not necessarily a single attribute, and the set $\set{p_1,\ldots,p_n}$ may be computed from all possible combinations of several attributes, which in this case are allowed to be correlated.

If we have access to the actual data, then we may just look at the actual $t$ and take $p = p_k$ such that $t = x_k$, which guarantees that the analysis will be linear w.r.t, the number of table rows. Without having access to the actual data (which may happen if global sensitivity is used to enforce DP), we would need to make a theoretical estimate. We need to take the worst $p_k$ to ensure that each row is protected. If $n \leq \infty$, we can do it for each $p_k$ one by one, but it is not computationally reasonable. Instead, we may start directly from $p_k$ that leaks the most. In the previous paragraph we have shown that the worst case is $p' = \frac{1}{1 + e^{-\epsilon/2}}$. This value itself may be missing from $\set{p_1,\ldots,p_n}$. However, since the function $f(p)$ has exactly one local maximum at $p'$ and is monotone at $(-\infty,p']$ and $[p',\infty)$, the values $p_l$ and $p_r$ that are closest to $p'$ from left and right respectively are the worst cases. It suffices to compute $\epsilon$ only for these two values, and take the smallest one.

\subsection{The Multivariate Case: AND-events}\label{sec:andevent}

Let us compute the $\epsilon$ for a single set $S_k \in \mathcal{S}$, where $S_k = \set{(t_1,r_1),\ldots,(t_n,r_n)}$.  Assume that there is an upper bound $R_i$ on each dimension. The univariate case can be generalized to multivariate, treating $X = (X_1,\ldots,X_n)$ as a single vector variable, taking $\vec{t} = (t_1,\ldots,t_n)$, $\vec{r} = (r_1,\ldots,r_n)$, $\vec{R} = (R_1,\ldots,R_n)$. For this, we need to clarify how the distance in $X$ should be defined. If it is an $\ell_p$ norm of underlying dimensions, we will have $d(t,x) = \norm{t_1 - x_1,\ldots,t_n - x_n}_{p}$. According to our intuition, the attacker wins if he guesses correctly a point from the rectangle $[t_1 - r_1,t_1 + r_1] \times \cdots \times [t_n-r_n,t_n + r_n]$. Hence, in general we take $p = \infty$. In some cases, we may as well be interested in other value of $p$, e.g. if $t_1$ and $t_2$ are coordinates of geographic location, then $p = 2$ corresponds more to our intuition.

In Sec.~\ref{sec:gadp}, the numbers $r$ and $a$ are single numbers, and not vectors. In general, we should take $R = \norm{\vec{R}}_{p}$, $r = \norm{\vec{r}}_{p}$ and $a = \norm{\vec{a}}_{p}$. However, it is not clear how to come up with $\vec{a}$ in the first place. First of all, let us \emph{scale} all dimensions to get $r := r_i = r_j$ for all $i,j$, which allows us to define a single $r$. E.g. take $r = 1$ and scale all $R_i$ accordingly. We can now optimize $a$ based on this $r$. Still, as the scaled $R_i$ can be different, the same value $a$ might not fit into every dimension. Optimizing a multivariate function by trying out all $a_i \in \set{r + \frac{k}{N} \cdot (R_i - r)\ |\ k \in \set{1,\ldots,N}}$ is too expensive. Hence, we first sample $r < a \leq R$, which we will use for distance. We will need particular $a_i$ to compute the probability weights $g(a_i)$, and we take $a_i = \min(a,R_i)$.

In Sec.~\ref{sec:attadv:univar}, we have shown how to compute $g(z_i)$ for one-dimensional $z_i$. Assuming that the variables are independent, we can compute $g(z) = \prod_{i} g(z_i)$ for an AND-event. If the variables are not independent, then there will be some special way of defining $g(z)$, which depends on the distribution. The relations between variables can make only certain combinations of $(a_1,\ldots,a_n)$ possible. However, since $a_i$ are only used in $g(a_i)$ anyway, it suffices to work with $a$ only. Thus, a subspace of correlated variables would be no different from the univariate case, and the probability distribution over that subspace should be given in advance.

The discussion above can be summarized into Theorem~\ref{thm:andevent}.
\begin{theorem}\label{thm:andevent}
Let $X  = (X_1,\ldots,X_n)$, where $X_i$ are pairwise independent, and $R_i = \max_{x_i,x'_i \in X_i} d(x_i,x'_i)$. Let $(t_1,\ldots,t_n)$ be the actual data, and let $X' = \set{\vec{x} \in X\ |\ d(t_i,x_i) \leq r_i}$. Let $g_i(z) := \prpre{x_i\ |\ d(x_i,t_i) \leq z}$. Let $\noised{q}$ be an $\epsilon$-DP mechanism w.r.t. norm $\norm{x_1,\ldots,x_n}_{\infty}$. To bound the guessing advantage of $X'$ by $\delta$, we need to take

\[\epsilon \leq \frac{-\ln\left(\frac{\prod_{i=1}^{n} g_i(r_i)}{\prod_{i=1}^{n} g_i(a_i) - \prod_{i=1}^{n} g_i(r_i)} \cdot (\frac{1}{\delta + \prod_{i=1}^{n} g_i(r_i)} - 1)\right)}{a}\enspace.\]

\noindent where $a_i$ is a freely chosen value satisfying $r_i < a_i \leq R_i$, and $a = \norm{a_1,\ldots,a_n}_{\infty}$.
\end{theorem}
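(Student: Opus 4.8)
The plan is to reduce the multivariate statement to the univariate bound of Equation~\ref{eq:main} by regarding $X = (X_1,\ldots,X_n)$ as a single vector-valued variable equipped with the $\norm{\cdot}_\infty$ norm, and then to use independence of the $X_i$ to evaluate the two prior weights occurring in that formula as products. Concretely, I would set $r := \norm{r_1,\ldots,r_n}_\infty$ and $a := \norm{a_1,\ldots,a_n}_\infty$ with $a_i = \min(a,R_i)$, so that the good-guess set $X'$ and its enlargement are both axis-aligned boxes centred at $\vec{t}$. Since Equation~\ref{eq:main} was derived for a generic metric space $X$, it applies verbatim to this vector variable once the weights are identified, and the target formula is obtained simply by substituting the product expressions below for $\prpre{X'}$ and $\prpre{\hat{X'_a}}$.

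The first step is to verify the two weight identities. The set $X' = \set{\vec{x} \mid d(t_i,x_i) \leq r_i}$ is exactly the product $\prod_i \set{x_i \mid d(t_i,x_i)\leq r_i}$, so independence of the components gives $\prpre{X'} = \prod_i \prpre{x_i \mid d(t_i,x_i)\leq r_i} = \prod_i g_i(r_i)$. Similarly, the enlarged box $\prod_i \set{x_i \mid d(t_i,x_i)\leq a_i}$ has prior weight $\prod_i g_i(a_i)$, and because $X'$ is contained in it, the shell $\hat{X'_a}$ obtained by removing $X'$ has weight $\prpre{\hat{X'_a}} = \prod_i g_i(a_i) - \prod_i g_i(r_i)$. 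Inserting these into Equation~\ref{eq:main} yields the claimed bound with the scalar $a = \norm{a_1,\ldots,a_n}_\infty$ in the denominator.

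The step that carries the real content is justifying that this single scalar $a$ is the correct exponent in the differential-privacy ratio. Because $\noised{q}$ is $\epsilon$-DP with respect to $\norm{\cdot}_\infty$, the density-ratio inequality $\fy(y \mid x)/\fy(y \mid x') \geq e^{-\epsilon d(x,x')}$ holds with $d(x,x') = \max_i \abs{x_i - x'_i}$ for $x$ in the enlarged box and $x' \in X'$, and it is precisely the choice $a_i = \min(a,R_i)$ together with the $\ell_\infty$ geometry that keeps this maximum controlled by $a$. This lets the factor $e^{-\epsilon a}$ be pulled out of the numerator integral uniformly, reducing the ratio of integrals to the ratio of prior weights $\prpre{\hat{X'_a}}/\prpre{X'}$ as in Section~\ref{sec:gadp}. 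I expect this reconciliation to be the main obstacle: the per-dimension probability weights genuinely require the full vector $(a_1,\ldots,a_n)$, whereas the exponent must collapse to the single $a$, and the two are tied together only through the infinity norm. Once the exponent is fixed, the remainder is the same one-dimensional manipulation already performed in Section~\ref{sec:gadp} — substituting into the closed form for $\prpost{X'}$ and solving $\prpost{X'} \leq \prpre{X'} + \delta$ for $\epsilon$.
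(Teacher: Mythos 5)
Your proposal follows essentially the same route as the paper: Section~\ref{sec:andevent} likewise treats $X$ as a single vector variable under the $\ell_\infty$ norm, uses independence to write $\prpre{X'}=\prod_i g_i(r_i)$ and $\prpre{\hat{X'_a}}=\prod_i g_i(a_i)-\prod_i g_i(r_i)$, and substitutes into Equation~\ref{eq:main} with $a=\norm{a_1,\ldots,a_n}_\infty$ and $a_i=\min(a,R_i)$. Note only that the step you flag as the main obstacle --- bounding $d(\vec{x},\vec{x'})$ by $a$ for $\vec{x}$ in the enlarged box and $\vec{x'}\in X'$ rather than by $\max_i(a_i+r_i)$ --- is treated no more carefully in the paper itself (it is acknowledged only in the univariate ``corner'' discussion), so your argument matches the paper's level of rigor.
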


\subsection{The Multivariate Case: OR-events}\label{sec:orevent}

There are two ways to define guessing advantage for entire $\mathcal{S} = \set{S_k\ |\ k \in [|\mathcal{S}|]}$:
\begin{enumerate}
\item\label{item:v1} Each $S_k \in \mathcal{S}$ is protected independently. E.g. if we have two uniformly distributed binary variables, then advantage $\delta = 10\%$ means that the probability of guessing increases from $50\%$ to $60\%$ for \emph{any} of these variables. For each $S_k \in \mathcal{S}$, we need to satisfy Def.~\ref{def:ga} with $X_k = \set{x\ |\ \bigwedge_j x_{i_j} \leq r_{i_j}}$. The suitable $\epsilon$ for differential privacy noise can be found as described in Sec.~\ref{sec:attadv:univar} (for univariate case) and Sec.~\ref{sec:andevent} (for multivariate case). We take the maximum of obtained noise magnitudes, which is sufficient to protect \emph{each} sub{\sensdata} of $\mathcal{S}$.
\item\label{item:v2} Alternatively, we can estimate the probability that the attacker guesses \emph{at least one} {\sensdata}. This poses a different question for an OR condition, and the prior probability gets completely different meaning. E.g. if we have two uniformly distributed binary variables, we will have $75\%$ of correct and $25\%$ of incorrect guesses, and $\delta = 10\%$ means that the probability of guessing \emph{at least one} of these two increases from $75\%$ to $85\%$. This approach assumes $X_j = \set{x | \bigvee_k \bigwedge_j x_{i_j} \leq r_{i_j}}$ in Def.~\ref{def:ga}.
\end{enumerate}

Variant~(\ref{item:v1}) is based purely on results of Sec.~\ref{sec:attadv:univar} and Sec.~\ref{sec:andevent}. In this section, we show how to deal with variant~(\ref{item:v2}).

It is easier to estimate approximation of an AND of conditions, since the set $X'$ is bounded by $r_i$ at the coordinate $X'_i$, and $d(x,x') \leq r = \norm{r_1,\ldots,r_n}_p$ for all $x,x' \in X$. However, if the attacker wants to guess an OR of conditions, then he may guess $x_i$ correctly even if some other variable $x_j$ is as far from the actual value $t_j$ as possible, and we may only claim that $d(x,x') \leq R = \norm{R_1,\ldots,R_n}_{\infty}$. A comparison of AND and OR sets is depicted in Figure~\ref{fig:andor2D}.

If we take $a_i = R_i$ for all $i$, then the only difference of an OR-event from an AND-event is the construction of $\prpre{X'}$. We have $\prpre{X'} = \prod_{i=1}^n g(a_i) - \prod_{i=1}^n g(a_i - r_i)$.

Let us see what happens if we want to use $a_i < R_i$. Now the problem is that we may have $d(\vec{x},\vec{x'}) > a$ for $\vec{x},\vec{x'} \in X'$, as the distances inside $X'$ are not bounded by $r$. We need to approach it differently.

Let $g_{\vec{t'}}(z) = \pr{\set{x\ |\ d(x_1,t'_1) \leq z \vee \cdots \vee d(x_n,t'_n) \leq z}}$ for $\vec{t'} \in X$. Let $\vec{t} = (t_1,\ldots,t_n)$ be the actual datapoint. 

We split both $X'$ and $\hat{X'}$ into the same number of blocks. The idea is that each block $X'_k$ of $X'$ would have a sufficiently close unique neighbour block $\hat{X'}_k$ in $\hat{X'}$. Let $\vec{a} = (a_1,\ldots,a_n)$, and $a = 2\norm{a_1,\ldots,a_n}_{\infty}$. First, we define
\[X'_0 := \set{\vec{x} | \bigwedge_{i=1}^n x_i \in \mathcal{B}(t_i,a_i) \wedge \left(\bigvee_{i=1}^{n} x_i \in \mathcal{B}(t_i,r_i)\right)}\]
and
\[\hat{X'_0} := \set{\vec{x} |\bigwedge_{i=1}^n x_i \in \mathcal{B}(t_i,a_i)} \setminus X'_0\enspace.\]
We also define
\[X^{\ell}_k = \set{\vec{x} | x_k \in \mathcal{B}(t_i,r_i) \wedge \bigwedge_{i=1}^n x_i \in \mathcal{B}(t_i + 2\ell_i \cdot a_i,a_i)}\]
and
\[\hat{X^{\ell}_k} = \set{\vec{x} | \bigwedge_{i=1}^n x_i \in \mathcal{B}(t_i + 2\ell_i \cdot a_i,a_i)} \setminus \hat{X'_0}\]
for $\ell \in \mathcal{L} \subseteq \ZZ^{n}$ and $k \in [n]$, where $\mathcal{L}$ depends on the size of the total space $X$. All blocks are pairwise disjoint, and $X'_0 \cup \bigcup_{\ell,k}X'^{\ell}_k = X'$. An example of splitting into blocks is given in Figure~\ref{fig:split}.

\begin{figure}
\begin{subfigure}{0.4\textwidth}
\center
\input{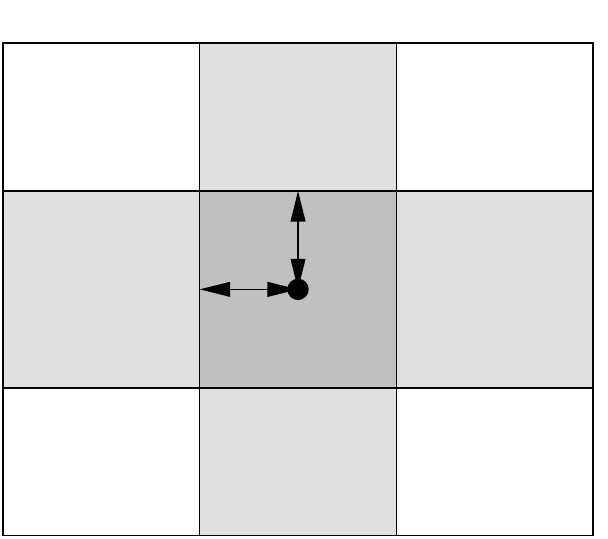_t}
\caption{The area of correctly guessing AND and OR of two approximations}\label{fig:andor2D}
\end{subfigure}
\hfill
\begin{subfigure}{0.5\textwidth}
\center
\input{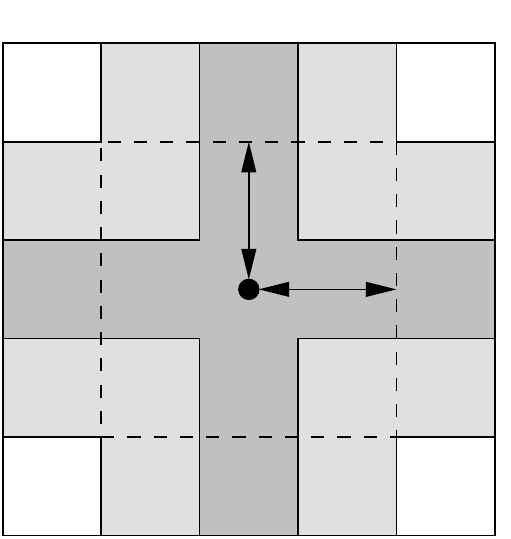_t}
\caption{Splitting $X'$ and $\hat{X'}$ into easily comparable blocks}\label{fig:split}
\end{subfigure}
\end{figure}

The constructed pairs enjoy the following important properties:
\begin{enumerate}
\item\label{pr:1} For all $\vec{x'} \in X'_0$, $\vec{x} \in \hat{X'_0}$ (and also $\vec{x'} \in X^{\ell}_k$, $\vec{x} \in \hat{X^{\ell}_k}$), we have $d(\vec{x},\vec{x'}) \leq a$;
\item\label{pr:2} We have $\frac{\prpost{X'_0}}{\prpost{\hat{X'_0}}} = \frac{g_{\vec{t}}(a_1,\ldots,a_n) - g_{\vec{t}}((a_1-1,\ldots,a_n-1)))}{g_{\vec{t}}((a_1-1,\ldots,a_n-1))} =: \alpha_0$;
\item\label{pr:3} For all $k \in [n]$, $\ell \in \mathcal{L}$, we have $\frac{\prpost{X^{\ell}_k}}{\prpost{\hat{X^{\ell}}_k}}= \frac{g_{\vec{t'}}(a_1,\ldots,r_k,\ldots,a_n)}{g_{\vec{t'}}(a_1,\ldots,a_k,\ldots,a_n) - g_{\vec{t'}}(a_1,\ldots,r_k,\ldots,a_n)}$, where $\vec{t'} = (t_1 + 2\ell_1\cdot a_1,\ldots,t_k,\ldots,t_n + 2\ell_n \cdot a_n)$. Assuming that the variables are independent, this equals $\frac{g_{t_k}(r_k)}{g_{t_k}(a_k) - g_{t_k}(r_k)} =: \alpha_k$.
\end{enumerate}

We will now analyze $\frac{\prpost{X'_0}}{\prpost{\hat{X'}_0}}$. By property~(\ref{pr:1}), $d(\vec{x},\vec{x'}) \leq a$, so $\epsilon$-DP gives $\fy(y\ | \vec{x'}) \leq e^{a\epsilon}\cdot\fy(y\ | \vec{x})$. We have
\begin{eqnarray*}
\frac{\prpost{X'_0}}{\prpost{\hat{X'}_0}} & = & \frac{\int_{X'_0}\fy(y\ | \vec{x'})\cdot\fxpre(\vec{x'})\ d\vec{x'}}{\int_{\hat{X'_0}}\fy(y\ | \vec{x})\cdot\fxpre(\vec{x})\ d\vec{x}}\\
& \leq & e^{a\epsilon}\cdot\frac{\int_{X'_0}\fxpre(\vec{x'})\ d\vec{x'}}{\int_{\hat{X'_0}}\fxpre(\vec{x})\ d\vec{x}} = e^{a\epsilon}\cdot\frac{\prpre{X'_0}}{\prpre{\hat{X'}_0}}= e^{a\epsilon}\cdot\alpha_0\enspace.
\end{eqnarray*}

Similarly, for $k \in [n]$, $\ell \in \mathcal{L}$, we have
\begin{eqnarray*}
\frac{\prpost{X^{\ell}_k}}{\prpost{\hat{X^{\ell}}_k}} & \leq & e^{a\epsilon}\cdot\frac{\prpre{X^\ell_k}}{\prpre{\hat{X^{\ell}}_k}}= e^{a\epsilon}\cdot\alpha_k\enspace.
\end{eqnarray*}

Let $\alpha = \max_{k \in \set{0,1,\ldots,n}} \alpha_k$. We get

\begin{eqnarray*}
\frac{\prpost{X'}}{\prpost{\hat{X'}}} & = & \frac{\prpost{X'_0} + \sum_{k \in [n],\ell\in \mathcal{L}}\prpost{X^{\ell}_k}}{\prpost{\hat{X'_0}} + \sum_{k \in [n],\ell \in \mathcal{L}}\prpost{\hat{X^{\ell}_k}}}\\
& \leq & \frac{e^{a\epsilon}\cdot\alpha \cdot \prpost{\hat{X'}_0} + \sum_{k \in [n],\ell\in \mathcal{L}}e^{a\epsilon}\cdot\alpha\cdot\prpost{\hat{X^{\ell}}_k}}{\prpost{\hat{X'_0}} + \sum_{k \in [n],\ell \in \mathcal{L}}\prpost{\hat{X^{\ell}_k}}}\\
& = & e^{a\epsilon}\cdot\alpha \enspace.
\end{eqnarray*}

The final result depends on $l = \mathsf{argmax}_{k \in [n]} \alpha_k$.

\begin{enumerate}
\item Let $l = 0$. By property~(\ref{pr:2}), we have $\frac{\prpre{X'_0}}{\prpre{\hat{X'_0}}} = \frac{g_{\vec{t}}(a_1,\ldots,a_n) - g_{\vec{t}}((a_1-1,\ldots,a_n-1)))}{g_{\vec{t}}((a_1-1,\ldots,a_n-1))}$. Applying Equation~\ref{eq:main}, we get some $\epsilon_0$.
\item Let $l = k \neq 0$. By property~(\ref{pr:3}), we have $\frac{\prpre{X^\ell_k}}{\prpre{\hat{X^{\ell}_k}}} = \frac{g_{t_k}(r_k)}{g_{t_k}(a_k) - g_{t_k}(r_k)}$ for any $\ell \in \mathcal{L}$. Applying Equation~\ref{eq:main}, we get some $\epsilon_k$.
\end{enumerate}

We take $\epsilon = \min_{k \in \set{0,1,\ldots,n}}(\epsilon_k)$. This shows that, in addition to protecting each dimension separately, we also need to take care of the $a_1 \times \cdots \times a_n$ block surrounding the actual input $\vec{t}$. We emphasize that it holds only for independent variables, as otherwise we cannot treat different $\epsilon_k$ values independently. 


The discussion above can be summarized into Theorem~\ref{thm:orevent}.
\begin{theorem}\label{thm:orevent}
Let $X  = (X_1,\ldots,X_n)$, where $X_i$ are pairwise independent, and $R_i = \max_{x_i,x'_i \in X_i} d(x_i,x'_i)$. Let $(t_1,\ldots,t_n)$ be the actual data, and let $X' = \set{\vec{x} \in X\ |\ \bigvee_{i=1}^{n}d(t_i,x_i) \leq r_i}$. Let $g_i(z) := \max_{t_i \in X'_i}\prpre{x_i\ |\ d(x_i,t_i) \leq z}$. Let $\noised{q}$ be an $\epsilon$-DP mechanism w.r.t. norm $\norm{x_1,\ldots,x_n}_{\infty}$. To bound the guessing advantage of $X'$ by $\delta$, we need to take $\epsilon = \min_{k \in \set{0,1,\ldots,n}}(\epsilon_k)$, where
\begin{eqnarray*}
\epsilon_0 &=& \frac{-\ln\left(\frac{\prod_{i=1}^{n} g(a_i) - \prod_{i=1}^{n} g(a_i-1)}{\prod_{i=1}^{n} g(a_i-1)} \cdot (\frac{1}{\delta + \prod_{i=1}^{n} g(a_i) - \prod_{i=1}^{n} g(a_i-1)} - 1)\right)}{a}\enspace;\\
\epsilon_k &=& \frac{-\ln\left(\frac{g_{t_k}(r_k)}{g_{t_k}(a_k) - g_{t_k}(r_k)} \cdot (\frac{1}{\delta + g_{t_k}(r_k)} - 1)\right)}{a}\enspace;\\
\end{eqnarray*}
\noindent where $a_i$ is a freely chosen value satisfying $r_i \leq a_i \leq R_i$, and $a = \norm{a_1,\ldots,a_n}_{\infty}$.
\end{theorem}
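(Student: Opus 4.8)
The plan is to reduce the OR-event to the one-dimensional estimate of Equation~\ref{eq:main} by means of the block decomposition introduced just before the statement. First I would \emph{restrict the attacker's search space} to $X' \cup \hat{X'}$, where $\hat{X'} = \hat{X'_0} \cup \bigcup_{k,\ell} \hat{X^\ell_k}$ is the matched complement assembled from the blocks; as in the paragraph on constraining the search space, this amounts to adding to the extra knowledge the information that the input lies in $X' \cup \hat{X'}$, so that $\prpost{X'} + \prpost{\hat{X'}} = 1$. It then suffices to bound the single ratio $\prpost{X'}/\prpost{\hat{X'}}$ and invert it, since $\prpost{X'}$ is monotone increasing in that ratio. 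The reason the univariate argument of Section~\ref{sec:attadv:univar} cannot be applied verbatim is that the OR-set $X'$ has diameter up to $R$, not $r$, so there is no single region within distance $a$ of all of $X'$; the block decomposition is exactly what repairs this.

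Next I would verify the three stated properties of the decomposition. Property~\ref{pr:1} is geometric: by construction each $X^\ell_k$ and its partner $\hat{X^\ell_k}$ (together with $X'_0$ and $\hat{X'_0}$) lie in a common $\ell_\infty$-box, so that $d(\vec x,\vec x')\le a$ for $\vec{x'}$ in any block of $X'$ and $\vec x$ in its partner; here I must check that the shifts $2\ell_i a_i$ and the index set $\mathcal{L}$ make the blocks pairwise disjoint and exhaust $X'$ and $\hat{X'}$. Properties~\ref{pr:2} and~\ref{pr:3} are the explicit prior ratios $\alpha_0$ and $\alpha_k$: using pairwise independence I would factorize the prior weight of each block into a product over coordinates, so that $\prpre{X'_0}$ and $\prpre{\hat{X'_0}}$ reduce to differences of products of the univariate $g_i$, and $\prpre{X^\ell_k}/\prpre{\hat{X^\ell_k}}$ collapses to $g_{t_k}(r_k)/(g_{t_k}(a_k) - g_{t_k}(r_k))$, independent of $\ell$. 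It is precisely to make this bound $\ell$-independent, i.e.\ valid for \emph{every} shifted copy of a block, that $g_i$ is defined as the \emph{maximum} of $\prpre{x_i\mid d(x_i,t_i)\le z}$ over centers $t_i \in X'_i$.

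With the properties in hand I would apply $\epsilon$-DP to each matched pair: since $d(\vec x,\vec x')\le a$, Definition~\ref{def:dp} gives $\fy(y\mid\vec{x'}) \le e^{a\epsilon}\fy(y\mid\vec x)$, and after the common normalizing factor cancels this yields $\prpost{X^\ell_k}/\prpost{\hat{X^\ell_k}} \le e^{a\epsilon}\alpha_k$, and likewise for the central pair. Writing $\alpha = \max_{k\in\{0,\ldots,n\}}\alpha_k$ and applying the mediant inequality (a quotient of sums of nonnegative terms is at most the largest of the term-wise quotients) gives $\prpost{X'}/\prpost{\hat{X'}} \le e^{a\epsilon}\alpha$. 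Inverting as above reproduces the form of Equation~\ref{eq:main} with the block-local prior quantities in place of $\prpre{X'}$ and $\prpre{\hat{X'_a}}$; imposing $\prpost{X'}\le\prpre{X'}+\delta$ and solving for $\epsilon$ turns the bound realized by block $k$ into the stated $\epsilon_k$. Because we cannot know in advance which block attains the maximum $\alpha$, the constraint must hold simultaneously for all blocks, which forces $\epsilon = \min_{k}\epsilon_k$. The symmetric lower bound $\prpost{X'}\ge\prpre{X'}-\delta$ is handled exactly as in Section~\ref{sec:gadp} by swapping the roles of the set and its complement.

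The main obstacle I expect is the bookkeeping of the block decomposition rather than any single inequality: one has to define $\mathcal{L}$ so that the shifted blocks $X^\ell_k$ and $\hat{X^\ell_k}$ tile $X'$ and $\hat{X'}$ exactly (disjointly and with no leftover mass), stay inside the ambient space $X$, and always come in matched pairs at $\ell_\infty$-distance at most $a$. Getting the boundary cases of this tiling right, together with the use of the per-coordinate maxima in $g_i$ to make the per-block bound uniform over $\ell$, is where the real content of the argument lies; once the decomposition is set up correctly, the DP step, the mediant inequality, and the inversion to $\epsilon_k$ are all routine consequences of the univariate development.
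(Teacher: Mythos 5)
Your proposal follows essentially the same route as the paper: the same block decomposition of $X'$ and $\hat{X'}$ into a central pair plus shifted pairs, the same three properties (distance bound $a$ within each matched pair, the prior ratios $\alpha_0$ and $\alpha_k$ factorized via independence), the DP step applied pairwise, the mediant inequality to get $\prpost{X'}/\prpost{\hat{X'}} \leq e^{a\epsilon}\max_k \alpha_k$, and the inversion via Equation~\ref{eq:main} yielding $\epsilon = \min_k \epsilon_k$. Your identification of the tiling bookkeeping for $\mathcal{L}$ as the delicate point, and of the role of the maximum over centers in the definition of $g_i$, matches what the paper leaves implicit.
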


\section{Queries with Multiple Outputs}\label{sec:multout}

The results of Sec~\ref{sec:andevent} and Sec.~\ref{sec:orevent} do not depend on the definition of the set of outputs $q(X)$. They only tell which $\epsilon$ is needed to satisfy the desired guessing advantage, and our task is to come up with a mechanism achieving this $\epsilon$. If there are multiple outputs, we can use the standard notions of parallel and sequential composition.

\begin{theorem}[Sequential composition~\cite{DBLP:journals/corr/OhV13}]\label{thm:dpcomp}
Let $q(x) = (q_1(x) \in \RR,\ldots,q_m(x) \in \RR)$. Let DP mechanism be such that, for all $i \in [m]$, we have $\pr{\noised{q_i}(x) \in Y} \leq e^{\epsilon_i}\cdot\pr{\noised{q_i}(x') \in Y}$ for all subsets $Y \subseteq \RR$. Then,
\[
\pr{\noised{q}(x) \in Y} \leq e^{\sum_{i=1}^{m}\epsilon_i}\cdot\pr{\noised{q}(x') \in Y}
\]
for all subsets $Y \subseteq \RR^m$.
\end{theorem}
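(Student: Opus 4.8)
The plan is to work under the (implicit, non-adaptive) composition model in which the noise added to the individual coordinates is independent given $x$, so that the law of $\noised{q}(x)$ is the product of the laws of the $\noised{q_i}(x)$; the argument is carried out for a fixed pair $x,x'$, which is all the hypothesis provides and all the conclusion requires. I would prove the claim by induction on $m$, the base case $m=1$ being exactly the assumption.

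For the inductive step, write $q' = (q_1,\ldots,q_{m-1})$, so that $\noised{q}(x) = (\noised{q'}(x),\noised{q_m}(x))$ with the two components independent given $x$. For a measurable $Y \subseteq \RR^m$ and $y_m \in \RR$, let $Y_{y_m} = \set{(y_1,\ldots,y_{m-1}) : (y_1,\ldots,y_m) \in Y}$ be the corresponding section, and let $\mu_m^x$ denote the distribution of $\noised{q_m}(x)$. By Fubini's theorem applied to the product measure,
\[\pr{\noised{q}(x) \in Y} = \int_{\RR} \pr{\noised{q'}(x) \in Y_{y_m}}\, d\mu_m^x(y_m)\enspace.\]
Applying the induction hypothesis to the integrand gives $\pr{\noised{q'}(x) \in Y_{y_m}} \leq e^{\sum_{i=1}^{m-1}\epsilon_i}\cdot\pr{\noised{q'}(x') \in Y_{y_m}}$, and pulling the constant outside leaves an integral of a nonnegative function of $y_m$ against $\mu_m^x$.

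The final step replaces the measure $\mu_m^x$ by $\mu_m^{x'}$. The single-coordinate hypothesis says $\mu_m^x(Z) \leq e^{\epsilon_m}\cdot\mu_m^{x'}(Z)$ for every measurable $Z \subseteq \RR$, i.e. $\mu_m^x \leq e^{\epsilon_m}\mu_m^{x'}$ as measures; by the standard fact that domination of measures transfers to integrals of nonnegative functions (approximate by simple functions and pass to a monotone limit), $\int h\, d\mu_m^x \leq e^{\epsilon_m}\int h\, d\mu_m^{x'}$ for every nonnegative measurable $h$. Taking $h(y_m) = \pr{\noised{q'}(x') \in Y_{y_m}}$ and applying Fubini in reverse recovers $\pr{\noised{q}(x') \in Y}$, and collecting the two exponential factors yields $e^{\sum_{i=1}^{m}\epsilon_i}$, closing the induction.

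The main obstacle is purely measure-theoretic: both hypothesis and conclusion are phrased for arbitrary measurable sets rather than for rectangles or densities, so one cannot simply factor a joint density as $\prod_i \fy(y_i\,|\,x)$ and bound it termwise (that would presuppose the existence of densities and would directly cover only product sets). Handling a general $Y$ is what forces the combination of sections, Fubini, and the measure-domination lemma above. If one is willing to assume, as elsewhere in the paper, that each $\noised{q_i}$ admits a conditional density, the proof shortens considerably: the set bound implies $\fy(y_i\,|\,x) \leq e^{\epsilon_i}\fy(y_i\,|\,x')$ for almost every $y_i$, whence $\prod_i \fy(y_i\,|\,x) \leq e^{\sum_i \epsilon_i}\prod_i \fy(y_i\,|\,x')$, and integrating over $Y$ gives the result at once. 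The one assumption that must be stated explicitly throughout is the independence of the coordinate mechanisms given $x$.
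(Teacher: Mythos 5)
Your proof is correct, but there is nothing in the paper to compare it against: Theorem~\ref{thm:dpcomp} is imported by citation from the literature and the paper supplies no proof of its own. Your induction on $m$ --- slicing $Y$ into sections $Y_{y_m}$, applying Fubini to the product measure, invoking the inductive hypothesis on each section, and then replacing $\mu_m^{x}$ by $e^{\epsilon_m}\mu_m^{x'}$ via the domination-of-measures lemma --- is the standard argument and is carried out soundly, including the measure-theoretic care needed to handle arbitrary measurable $Y$ rather than just product sets or sets with densities. Your insistence on making the independence of the coordinate mechanisms (given $x$) explicit is not pedantry but a genuine repair of the statement: as written, the theorem only constrains the marginals of $\noised{q_1},\ldots,\noised{q_m}$, and one can construct jointly distributed mechanisms whose marginals are each $0$-DP yet whose joint law violates the composed bound (e.g.\ two input-independent fair coins that are perfectly correlated on $x$ and independent on $x'$). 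So some structural hypothesis on the joint mechanism --- independence, or the usual adaptive-composition model with fresh coins --- is necessary, and the paper leaves it implicit. The only mild caveat is that your independence assumption proves a slightly narrower statement than the adaptive sequential composition usually attributed to the cited source, but it covers the non-adaptive setting in which the paper actually uses the theorem.
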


Parallel composition of DP says that we can take $\max(\epsilon_i)$ instead of sum if the variables used by different queries are independent. Roughly speaking, the independence of inputs ensures that the condition $d(t,t') = 1$ affects at most one of the outputs. We can generalize this result using the notion of $\ell_p$-norms.

\begin{theorem}\label{thm:dpcomp2}
Let $q(x_1,\ldots,x_n) = (q_1(x_1) \in \RR,\ldots,q_m(x_m) \in \RR)$. Let the distance $d(x,x')$ for $x,x' \in \RR^m$ be defined as an $\ell_p$-norm. Let a privacy mechanism $\noised{q_i}$ be $\epsilon_i$-differentially-private. Then, the mechanism $\noised{q} = (\noised{q_1},\ldots,\noised{q_m})$ is $\ell_q(\epsilon_1,\ldots,\epsilon_m)$-differentially-private, where $\ell_q$ is the dual norm of $\ell_p$.
\end{theorem}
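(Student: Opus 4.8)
The plan is to reduce the claim to a pointwise bound on the ratio of output densities and then invoke the dual-norm (H\"older) inequality; throughout I work with the output densities whose existence is assumed elsewhere in the paper. Fix inputs $x = (x_1,\ldots,x_m)$ and $x' = (x'_1,\ldots,x'_m)$, and write $d_i := \abs{x_i - x'_i}$, so that $d(x,x') = \norm{d_1,\ldots,d_m}_p$ by hypothesis. Since each $\noised{q_i}$ reads only the coordinate $x_i$ and uses independent randomness, the joint output density of $\noised{q}$ factorizes: writing $f_i(\cdot\,|\,x_i)$ for the density of $\noised{q_i}(x_i)$, we have $\fy(y\,|\,x) = \prod_{i=1}^m f_i(y_i\,|\,x_i)$ for every $y = (y_1,\ldots,y_m)$.

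First I would bound each factor. Since $\noised{q_i}$ is $\epsilon_i$-differentially private and the component distance is $d_i$, Definition~\ref{def:dp} in its density form (as used in Sec.~\ref{sec:gadp}) gives $f_i(y_i\,|\,x_i) \leq e^{d_i \epsilon_i}\, f_i(y_i\,|\,x'_i)$. Multiplying these $m$ inequalities and using the factorization yields the pointwise bound
\[\fy(y\,|\,x) \leq e^{\sum_{i=1}^m d_i \epsilon_i}\cdot \fy(y\,|\,x') = e^{\langle \vec{d},\vec{\epsilon}\rangle}\cdot\fy(y\,|\,x')\enspace,\]
where $\vec{d} = (d_1,\ldots,d_m)$ and $\vec{\epsilon} = (\epsilon_1,\ldots,\epsilon_m)$.

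The key step is then to replace the inner product by a product of dual norms. By H\"older's inequality, which is precisely the statement that $\ell_q$ is the dual norm of $\ell_p$, we have $\langle \vec{d},\vec{\epsilon}\rangle \leq \norm{d_1,\ldots,d_m}_p\cdot\norm{\epsilon_1,\ldots,\epsilon_m}_q = d(x,x')\cdot \ell_q(\epsilon_1,\ldots,\epsilon_m)$. Substituting this into the pointwise bound and integrating over an arbitrary output set $Y \subseteq \RR^m$ gives $\pr{\noised{q}(x) \in Y} \leq e^{d(x,x')\cdot \ell_q(\epsilon_1,\ldots,\epsilon_m)}\cdot\pr{\noised{q}(x') \in Y}$, which is exactly $\ell_q(\epsilon_1,\ldots,\epsilon_m)$-differential privacy in the $\ell_p$ metric. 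As a sanity check, $p=1$ (a single coordinate changes) gives the dual $\ell_\infty$, i.e. the $\max$ of parallel composition, while $p=\infty$ gives the dual $\ell_1$, recovering the sum of Theorem~\ref{thm:dpcomp}.

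The main obstacle is recognizing the dual-norm step: one must see that the accumulated loss $\sum_i d_i\epsilon_i$ is a bilinear pairing of the distance vector with the privacy-parameter vector, and that its worst case over all $\vec{d}$ with $\norm{\vec{d}}_p \leq d(x,x')$ is by definition the dual norm. The only other delicate point is the density factorization, which relies on the independence of the component mechanisms' randomness and on each $q_i$ depending on a single distinct coordinate. For inputs or outputs where densities do not exist, I expect the same conclusion to follow from a coordinate-by-coordinate peeling argument directly on the set-based Definition~\ref{def:dp}, switching one coordinate from $x_i$ to $x'_i$ at a time to accumulate the factors $e^{d_i\epsilon_i}$ before applying H\"older.
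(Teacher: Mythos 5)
Your proposal is correct and follows essentially the same route as the paper: both accumulate the per-coordinate bound $e^{\sum_i d_i\epsilon_i}$ (the paper by citing Theorem~\ref{thm:dpcomp}, you by the equivalent density-factorization argument) and then pass to the dual norm, the paper via the supremum characterization $\norm{\epsilon_1,\ldots,\epsilon_m}_q = \sup\{\sum_i \epsilon_i z_i \mid \norm{z}_p \leq 1\}$ with $\vec{d}$ as a candidate $z$, you via the equivalent H\"older inequality. Your version handles general $d(x,x')$ directly rather than normalizing to $d(x,x')\leq 1$, but this is a cosmetic difference, not a different argument.
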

\begin{proof}
Let $d(x,x') \leq 1$, and let $Y = (Y_1,\ldots,Y_n) \subseteq \RR^m$ be arbitrary. We have $x = (x_1,\ldots,x_n)$ and $x' = (x'_1,\ldots,x'_n)$. For an $\ell_p$-norm, denoting $d_i := d(x_i,x'_i)$ for all $i \in [m]$, we have $d(x,x') = \norm{d_1,\ldots,d_m}_p$. For each $i$, we have $\pr{\noised{q_i}(x_i) \in Y_i} \leq e^{d_i\epsilon_i}\cdot\pr{\noised{q_i}(x') \in Y_i}$. By Theorem~\ref{thm:dpcomp}, $\pr{\noised{q}(x) \in Y} \leq e^{\sum_{i=1}^{m}d_i \epsilon_i}\cdot\pr{\noised{q}(x') \in Y}$, so the mechanism $\noised{q}$ is $\sum_{i=1}^{m}d_i \epsilon_i$-differentially-private. By definition of the dual norm, $\norm{\epsilon_1,\ldots,\epsilon_m}_q = \sup\set{\sum_{i=1}^m \epsilon_i z_i\ |\ \norm{z}_p \leq 1}$. Since $(d_1,\ldots,d_m)$ is one candidate for $z$ in this expression, we have $\sum_{i=1}^{m}d_i \epsilon_i \leq \norm{\epsilon_1,\ldots,\epsilon_m}_q$.\hfill$\square$
\end{proof}

The most intuitive instantiations for Theorem~\ref{thm:dpcomp2} are the cases of $\ell_1$ and $\ell_{\infty}$ norms, where $d(x,x')$ is an integer value, such as the number of rows. These will give us parallel and sequential compositions. In the case of $\ell_1$-norm, $d(x,x') = 1$ means that exactly one of the inputs $x_i$ will change by $1$, so it is sufficient to take $\max_i \epsilon_i$ to protect from one change. In the case of $\ell_{\infty}$-norm, $d(x,x') = 1$ means that each input $x_i$ may change by $1$, so we need to protect all the outputs at once, getting $\sum_i \epsilon_i$.

 We can use Theorem~\ref{thm:dpcomp2} to find appropriate partitionings to $\epsilon_i$, which are not unique, and the partitioning may be optimized in such a way that the error would be minimal, i.e. reserve larger $\epsilon_i$ for more sensitive $q_i$. Finding the optimal partitioning is a possible direction for future research. If the queries are similar (e.g. groups of a \texttt{GROUP BY} query), then the best option is to assign the same value to all $\epsilon_i$.

\section{Examples}

\subsection{Discrete data}\label{sec:example:cats}

Assume that we have a data table \texttt{cat}. Each cat can be male or female, and has one of the five main colors. We want to count all female tabby cats.
\[\text{\texttt{SELECT\ COUNT(*)\ FROM\ cat\ WHERE\ gender=`F'\ AND\ color=`tabby';}}\]
Define the distance between two databases as $1$ iff the gender or the color are different. Such distance is subsumed standard DP metric, as every tabby female is viewed as an included record, and every other combination as excluded ,and sensitivity of a COUNT query w.r.t. this metric is $1$, similarly to sensitivity of a COUNT query w.r.t. standard DP metric. Hence, the query could be executed with obtained $\epsilon$ in PINQ system.

We need to define some prior or the distributions of genders and colors. Let gender be distributed uniformly, and let colors be distributed as
\[\set{red : 0.2, white : 0.1, tabby : 0.25, black : 0.4,tortoise : 0.05}\enspace.\]

\paragraph{AND-event.}
First, assume that the attacker wins if he guesses both gender \emph{and} color. Combining the two sensitive attributes together, we get the probabilities of AND-events
\begin{align*}
\{& (red \wedge M) : 0.1, (white \wedge M) : 0.05, (tabby \wedge M) : 0.125, (black \wedge M) : 0.2,\\
  & (red \wedge F) : 0.1, (white \wedge F) : 0.05, (tabby \wedge F) : 0.125, (black \wedge F) : 0.2,\\
  & (tortoise \wedge M) : 0.025, (tortoise \wedge F) : 0.025\}\enspace.
\end{align*}

We want to bound guessing advantage by $\delta = 0.1$. As shown in Sec.~\ref{sec:attadv:univar}, the worst-case prior would be $p = \frac{1 - 0.1}{2} = 0.45$. The closest to the worst case are $(black \wedge M)$ and $(black \wedge F)$ with $p = 0.2$. We have $p \leq 1 - p$, so it suffices to take
\[\epsilon \leq \frac{-\ln\left(\frac{0.2}{0.8} \cdot (\frac{1}{0.1 + 0.2} - 1)\right)}{1} = -\ln\left(\frac{0.2}{0.8} \cdot\frac{0.7}{0.3}\right) \approx 0.539\enspace.\]

\paragraph{OR-event.}
Let us now assume that the attacker wins if he guesses either gender \emph{or} color. This is equivalent to `not guessing both in a wrong way'. We can write the set of possible correct answers as
\begin{align*}
\{& (red \vee M) : 0.6, (white \wedge M) : 0.55, (tabby \wedge M) : 0.625, (black \wedge M) : 0.7,\\
  & (red \wedge F) : 0.6, (white \wedge F) : 0.55, (tabby \wedge F) : 0.625, (black \wedge F) : 0.7,\\
  & (tortoise \wedge M) : 0.525, (tortoise \wedge F) : 0.525\}\enspace.
\end{align*}
The closest to the worst prior are $(tortoise \wedge M)$ and $(tortoise \wedge F)$ with $p = 0.525$. We have $p \leq 1 - p$, so it suffices to take
\[\epsilon\leq\frac{-\ln\left(\frac{0.475}{0.525} \cdot (\frac{1}{0.1 + 0.475} - 1)\right)}{1} = -\ln\left(\frac{0.475}{0.425} \cdot\frac{0.425}{0.575}\right) \approx 0.402 \enspace.\]

As sensitivity of COUNT query w.r.t. proposed metric is $1$, using Laplace mechanism (described e.g. in~\cite{DBLP:conf/icalp/Dwork06}), it would be sufficient to add Laplace noise with scaling parameter $\lambda = 1 / 0.539 \approx 1.86$ for the AND-event, and $1 / 0.402 \approx 2.49$ for the OR-event.
\subsection{Continuous data with univariate approximation}\label{sec:example:salary}

Assume that we have a data table \texttt{employee}. Each of the employees has some fixed salary, and we want to find the total employer expenses.
\[\text{\texttt{SELECT\ SUM(salary)\ FROM\ employee;}}\]
Assume that the attacker wins if he guesses someone's salary with precision of $r = 100$ currency units. For prior, let us assume that the salary of each employee is distributed normally according to $\mathcal{N}(\mu = 2000,\sigma^2 = 55556)$ (so $\sigma \approx 235.7$), i.e. $\pr{x \in (1000,3000)} \approx \erf{3} \approx 0.99998$. Using definition of CDF for normal distribution, $\pr{\abs{x - t} \leq r} = \frac{\erf{\frac{t + r - \mu}{\sigma\sqrt{2}}} - \erf{\frac{t - r - \mu}{\sigma\sqrt{2}}}}{2} = \frac{\erf{\frac{t - 1900}{333}} - \erf{\frac{t - 2100}{333}}}{2}$. Since there is no particular $t$ available, we consider the worst case. 

If we want to bound guessing advantage by $\delta = 0.1$, the worst-case prior would be $p = \frac{1 - 0.1}{2} = 0.45$. The prior closest to this value is achieved for $t = \mu = 2000$ for $p \approx 0.42$. Let us now take $a = 200$. We have $q = \erf{a / (\sigma \cdot \sqrt{2})} \approx 0.6$. Hence,

\[\epsilon \leq \frac{-\ln\left(\frac{0.42}{0.6 - 0.42} \cdot (\frac{1}{0.1 + 0.42} - 1)\right)}{200} = \frac{0.767}{200} \approx 0.0038\enspace.\]

While $\epsilon = 0.0038$ seems extremely small, note that such $\epsilon$ would ensure a very strong definition of differential privacy, where neighbour tables are defined as being $a = 200$ units apart. Alternatively, we could rescale the entire space e.g. by $a = 200$, getting $\epsilon = 0.767$. Such $\epsilon$ ensures differential privacy for neighbour databases defined as differing in $1$ unit. Since each currency unit contributes $1$ unit to the sum, the sensitivity of the query w.r.t. latter distance is $1$. Laplace noise with scaling $1 / \epsilon \approx 1.3$ is sufficient to ensure the guessing advantage below $0.1$, and the noise level is comparable to example of Sec.~\ref{sec:example:cats}. We see that dividing $\epsilon$ by $200$ is fine since we are just working with different magnitudes. Also, such a SUM query would have larger output than a COUNT query (for a similar number of rows), so the noise would have less impact than in example of Sec.~\ref{sec:example:cats}.

\subsection{Continuous data with multivariate approximation}\label{sec:example:ships}

Assume that we have a data table \texttt{ship}. Each ship has some geographic location and certain maximum speed. At some moment, the ships start sailing with their maximum speed towards the port located at the point $(0,0)$. We want to know when the first ship arrives at the port. Using the operator \texttt{<@>} for geographical distance, which is essentially $\ell_2$-norm, the query looks like
\[\text{\texttt{SELECT\ MIN((POINT(ship.x,ship.y)<@>(0,0))/ship.speed)\ FROM\ ship;}}\]
Assume that the attacker wins if he guesses some ship's location with precision of $r = 10$ units w.r.t. $\ell_2$-norm. Let the prior be defined as the distribution of \emph{distances} from the point $(0,0)$, which is $\mathcal{N}(\mu = 0,\sigma^2 = 1250)$ (so $\sigma \approx 35.4$), i.e. $\pr{(x,y) \in (-150,150)} \approx \erf{3} \approx 0.99998$. Note that the variables $x$ and $y$ do not need to be independent, as we work with their joint distribution directly.

Let $(t_x,t_y)$ be the actual location of a ship. Differently from Sec.~\ref{sec:example:salary}, we can find only a lower bound
\begin{eqnarray*}
\pr{\norm{(x,y) - (t_x,t_y)}_2 \leq r} & \geq & \pr{(\norm{((x,y) - (0,0)}_2 + \norm{(t_x,t_y) - (0,0)}_2 \leq r)}
\end{eqnarray*}
\begin{eqnarray*}
& = & \frac{\erf{\frac{\norm{(x,y)}_2 + r - \norm{(t_x,t_y)}_2 - \mu}{\sigma\sqrt{2}}} - \erf{\frac{\norm{(x,y)}_2 - r + \norm{(t_x,t_y)}_2 - \mu}{\sigma\sqrt{2}}}}{2}\\
& = & \frac{\erf{\frac{\norm{(x,y)}_2 - \norm{(t_x,t_y)}_2 + 10}{50}} - \erf{\frac{\norm{(x,y)}_2 + \norm{(t_x,t_y)}_2 - 10}{50}}}{2}\enspace.
\end{eqnarray*}
Bounding guessing advantage by $\delta = 0.1$, the worst-case prior would be $p = \frac{1 - 0.1}{2} = 0.45$. For the lower bound considered above, the prior closest to this value is achieved for $(t_x,t_y) = (0.0)$ for $p \approx 0.22$. Note that in this case inequality becomes an equality, so $p$ cannot be larger. Let us now take $a = 20$. We have $q = \erf{a / (\sigma \cdot \sqrt{2})} \approx 0.43$. Hence,

\[\epsilon \leq \frac{-\ln\left(\frac{0.22}{0.43 - 0.22} \cdot (\frac{1}{0.1 + 0.22} - 1)\right)}{10} = \frac{1.25}{20} \approx 0.063\enspace.\]

Similarly to Sec.~\ref{sec:example:salary}, such $\epsilon$ ensures differential privacy where neighbour tables are defined as being $a = 20$ units apart. Scaling the space by by $a = 20$, we get $\epsilon = 1.25$. Let us see how much one distance unit contributes to the final result. If the ship speed is $v$, then changing its location by $1$ will change the arrival time by $\frac{1}{v}$. This will not affect the minimum if that ship was not among the first ones, but in the worst case the query output will also change by $\frac{1}{v}$ as well. Let $v$ be the minimal speed of a ship, e.g. $v = 1$, so that the query sensitivity would be $1$ similarly to the previous examples. Laplace noise with scaling $1 / \epsilon \approx 0.8$ is sufficient to ensure the guessing advantage below $0.1$.

The noise level seems less than in Sec.~\ref{sec:example:salary}. But is it much or few? This depends on the magnitude of query outputs. While the same noise level could be small for SUM and COUNT queries on large tables, it could be worse for a MIN-query where the output itself is a small quantity. To estimate the goodness of noise, we need the actual data, and instead of \emph{absolute} error estimate the \emph{relative} error as done e.g. in~\cite{dersens}.

\section{Conclusion}

We have shown how to convert guessing advantage to epsilon. We see from some examples that the proposed method can give us reasonable noise.

While we have shown how to work with $\epsilon$-DP, there exist alternative definitions. Relations between $(\epsilon,\delta)$-DP and the guessing advantage are considered in App.~\ref{sec:epsilondelta}. The proposed approach is tightly bound to a particular DP mechanism, and we do not have generic results for $(\epsilon,\delta)$-DP. Extending it to some more general constructions would be an interesting future work.

\bibliographystyle{plain}
\bibliography{references}

\appendix

\section{Optimizing the radius of elements that we want to make indistinguishable from the input}\label{app:choosinga}
Let us assume that the inputs are distributed uniformly over a subset of $n$-dimensional Banach space with some $\ell_p$-norm. The hypervolume of a ball of radius $r$ in an $n$-dimensional space is $c_p \cdot r^n$ for a constant $c_p$ that depends on the particular used $\ell_p$-norm, so if the maximum possible radius is $R$, we have $\pr{X'} = \frac{c_p\cdot r^n}{c_p \cdot R^n} = \left(\frac{r}{R}\right)^n$.

A straightforward solution would be to find $a$ that maximizes $\epsilon$ directly. As $\epsilon$ and guessing advantage are correlated, it seems easier to first find $a$ that maximizes the advantage, and then compute the corresponding $\epsilon$.
We get
\begin{eqnarray*}
e^{-a\epsilon} \frac{\prpre{X'_a}}{\prpre{X'}} & = & e^{-a\epsilon} \cdot \frac{\left(\frac{a}{R}\right)^n - \left(\frac{r}{R}\right)^n}{\left(\frac{r}{R}\right)^n}\\
& = & e^{-a\epsilon}\cdot\left(\frac{a^n - r^n}{r^n}\right)\enspace,
\end{eqnarray*}
which allows to find the optimal $a$ for the given $\epsilon$, that maximizes the quantity.

For $n = 1$, this can be done using simple calculus, and it gives $\prpost{X'} \leq \frac{1}{1 + (e^{\epsilon r + 1}\cdot \epsilon r)^{-1}}$ for $a = \frac{1}{\epsilon} + r$. Now the problem is that $a$ in turn depends on $\epsilon$. Recall that our goal is to achieve $\prpost{X'} \leq \prpre{X'} + \delta$ for some predefined advantage $\delta$. Assuming uniform distribution on a subset of $X$, we get $\prpre{X'} = \frac{r}{R}$. Using the equality $\prpost{X'} \leq \frac{1}{1 + (e^{\epsilon r + 1}\cdot \epsilon r)^{-1}}$ we can express $\epsilon$ as follows:
\begin{eqnarray*}
\frac{1}{1 + (e^{\epsilon r + 1}\cdot \epsilon r)^{-1}} & \leq & \frac{r}{R} + \delta \\
1 + (e^{\epsilon r + 1}\cdot \epsilon r)^{-1} & \geq & \frac{1}{\frac{r}{R} + \delta} \\
(e^{\epsilon r + 1}\cdot \epsilon r)^{-1} & \geq & \frac{1}{\frac{r}{R} + \delta} - 1 = \frac{1 - (\frac{r}{R} + \delta)}{\frac{r}{R} + \delta} \\
e^{\epsilon r}\cdot \epsilon r & \leq & \frac{\frac{r}{R} + \delta}{e \cdot (1 - (\frac{r}{R} + \delta))} \enspace.
\end{eqnarray*}
To get $\epsilon r$, we need to find a solution to an equation of the form $x \cdot e^x = y$. The solution is $x = W(y)$, where $W$ is so-called \emph{Lambert function}. We get
\[\epsilon \leq W\left(\frac{\frac{r}{R} + \delta}{e \cdot (1 - (\frac{r}{R} + \delta))}\right) / r\enspace.\]

There is no closed form for $W$, but there exist good approximation methods for computing it. For example, we can use the iterative method of~\cite{lambertIterative}:

\begin{eqnarray*}
w_0(y)     & = & 1 \\
w_{n+1}(y) & = & w_{n-1}(y) - \frac{w_{n-1}(y) \cdot e^{w_n(y)} - y}{(w_n(y) + 1) \cdot e^{w_n(y)}}\enspace.
\end{eqnarray*}

For $n > 1$, it is easier to evaluate directly
\begin{eqnarray}\label{eq:multdim}
\prpost{X'} & = & \frac{\int_{X'} \pr{\noised{q}(x') \in Y} \fxpre(x') dx'}{\int_{X} \pr{\noised{q}(x) \in Y} \fxpre(x) dx} \leq e^{a\epsilon} \int_{X'} \frac{\fxpre(x')}{\int_{x : d(x,x') \leq a} \fxpre(x) dx} dx' \nonumber\\
& = & e^{a\epsilon} \int_{X'} \frac{\fxpre(x')}{\left(\frac{a}{R}\right)^n} dx' = \frac{e^{a\epsilon}}{\left(\frac{a}{R}\right)^n}\left(\frac{r}{R}\right)^n = e^{a\epsilon}\cdot\left(\frac{r}{a}\right)^n \enspace.
\end{eqnarray}
This gives $\prpost{X'} \leq \left(\frac{\epsilon \cdot e \cdot r}{n}\right)^{n}$ at the point $a = \frac{n}{\epsilon}$. Alternatively, we can compute optimal $a$ for each univariate dimension separately, by first computing $\epsilon$ and then turning it into $a = \frac{1}{\epsilon} + r$, and combine as discussed in Sec.~\ref{sec:andevent}.

\subsection{Fixing the Probability that Noise Stays Below a Certain Bound}\label{sec:errorbound}

First of all, we provide a small result that will not only be helpful for estimating $(\epsilon,\delta)$-GA, but is also useful by itself. As described in~\cite{dersens}, we quantify the noise magnitude as $\xi := \frac{c(t)}{b}$, where $c(t)$ is the derivative sensitivity at point $t$ ($t$ is the actual data) and $b$ a parameter that depends on $\epsilon$ and $\beta$, e.g. $b = (\frac{\epsilon}{\gamma} - \beta)$ for Cauchy distribution $\gencauchy{\gamma}$. The additive noise is $\xi \cdot \eta$, where $\eta$ is sampled from Cauchy distribution. Since the value $\xi$ itself does not give user any intuition how large that noise is, we interpreted $\xi$ as the probability that error stays below $\xi$. Knowing that $x \sim \frac{\sqrt{2}}{\pi}\cdot \frac{1}{1 + \abs{x}^4}$, we computed
\[\int_{-1}^{1}\frac{\sqrt{2}}{\pi}\cdot \frac{1}{1 + \abs{x}^4} dx \approx 0.78\enspace,\]
thus fixing the probability that noise stays below $\xi$ to a constant $p = 0.78$. We would like to make our result more flexible and let the user choose $p$. Denoting PDF of noise distribution by $f_{\eta}(x)$, we need to find $a$ such that
\[\int_{-a}^{a} f_{\eta}(x)\ dx = p\enspace.\]
For Laplace distribution, the equation $\int_{-a}^{a}\frac{\epsilon}{2}\cdot e^{-\abs{x}\epsilon}dx = p$ reduces to $1 - e^{-a\epsilon} = p$, and the solution is $a = \frac{-\ln(1 - p)}{\epsilon}$. Unfortunately, there is no nice solution for Cauchy noise. However, since $\int_{-a}^{a} f_{\eta}(x)\ dx = 2 \cdot \int_{0}^{a} f_{\eta}(x)\ dx$, and $f_{\eta}(x)$ is monotone in $[0,\infty)$, we can use window binary search to find $a$ for non-scaled Cauchy distribution. Now, in order to get probability $p$ for $x = \xi\eta$, we just need to take $a\xi$ instead of $a$, as we have
\[p = \int_{-a}^{a} f_{\eta}(x)\ dx = \int_{-a}^{a} f_{\eta}(\xi x)\ d(\xi x) = \int_{-a \xi}^{a \xi} \frac{1}{\xi} \cdot f_{\eta}(\xi x)\ dx\enspace.\]

\section{$(\epsilon,\delta)$-Guessing Advantage}\label{sec:epsilondelta}

So far, we have shown how to convert $\epsilon$ of a DP mechanism to an upper bound guessing advantage. We would like to extend our results to a $(\epsilon,\delta)$-DP mechanism. Unfortunately, it turns out that, for $\delta > 0$, the posterior guessing probability is tightly related to the quantity $f_{Y}(y|x)$, and turns out to be unbounded, i.e. there always exists $y \in Y$ for which the DP mechanism gives no guarantees at all. This happens in the ``tails'' of noise distribution, so although we cannot guarantee privacy for an arbitrary output $y \in Y$, we can compute a bound on guessing advantage $\epsilon'$ that holds with sufficiently high probability $\delta'$, similarly to $(\epsilon,\delta)$-DP definition. In overall, we get that the guessing advantage is bounded by $\epsilon'$ with probability $\delta'$, and it is bounded by $1$ with probability $1 - \delta'$. That is, in average the guessing advantage is bounded by $\epsilon' \cdot \delta' + (1 - \delta') = 1 + (\epsilon' - 1) \cdot \delta'$, giving us a single number, thus allowing to use the original definition of attacker advantage.

\subsection{$(\epsilon,\delta)$-DP to $(\epsilon,\delta)$-GA}\label{sec:dptoga}

Suppose that we have an $(\epsilon,\delta)$ DP mechanism. We show how to convert it to $(\epsilon',\delta')$ guessing advantage. Let $\delta'$ be the probability with which we want the obtained upper bound to hold. We show how to compute an upper bound on guessing advantage $\epsilon'$ that holds with probability $\delta'$. Alternatively, we can fix $\epsilon'$ and find $\delta'$. In both cases, it may happen that there is no $\epsilon' > 0$ or $\delta' > 0$ for the initial choice of $(\epsilon,\delta)$.

Let $t \in X$ be the actual input. Let $X' := \set{x\ |\ d(x,t) < r}$ be the set of inputs for which the attacker's guess is considered correct. Let $\hat{X'} \subseteq X\setminus{X'}$ be such that $\forall x' \in X', x \in \hat{X'}:\ d(x,x') \leq a$ for some $a \in \RR^{+}$. In practice, we take $\hat{X'} := \set{x\ |\ r < d(x,t) \leq a}$. Without loss of generality, we may take $a = 1$ and scale all distances accordingly. We will find $a$ that gives us least noise.

Let $p = \int_{X'}f_X(x)dx$, and $q = \int_{\hat{X'}}f_X(x)dx$. Let us compute the upper bound on posterior probability based on Bayesian inference, similarly to D.1.7. Assuming that the observation $y$ is obtained using $(\epsilon,\delta)$-DP mechanism w.r.t. scaled distance $d'(x,x') := \frac{1}{a} d(x,x')$, we have $f_Y(y|x) \leq e^{\epsilon} \cdot f_Y(y|x') + \delta$. An upper bound on posterior guess is

\begin{eqnarray*}
\prpost{X'} & = & \frac{1}{1 + \frac{\int_{\hat{X'}}f_Y(y|x)f_X(x)dx}{\int_{X'}f_Y(y|x')f_X(x')dx'}}\\
& = & \frac{1}{1 + \frac{\int_{\hat{X'}}\frac{f_Y(y|x)}{f_Y(y|x)e^{\epsilon} + \delta}f_X(x)dx}{\int_{X'}f_X(x')dx'}} = \frac{1}{1 + \frac{\int_{\hat{X'}}\frac{1}{e^{\epsilon} + \frac{\delta}{f_Y(y|x)}}f_X(x)dx}{\int_{X'}f_X(x')dx'}}\\
& \leq & \frac{1}{1 + \frac{1}{e^{\epsilon} + \frac{\delta}{\inf_{x \in \hat{X'}}f_Y(y|x)}}\cdot\frac{q}{p}}\\
\end{eqnarray*}

We want to bound this quantity from above by $p + \epsilon'$. The problem is with $\underline{\chi}_y := \inf_{x \in \hat{X'}}f_Y(y|x)$. It is possible that $\underline{\chi}_y = 0$ for the given $y \in Y$, which gives a trivial upper bound $\prpost{X'} \leq 1$. We need to introduce the probability $\delta'$ of getting ``sufficiently good'' $y \in Y$.

\textbf{If $\delta'$ is fixed}, then we just compute a lower bound on $\underline{\chi}_y$, which is $f_{\eta}(c(t) + d(\delta'))$, where $d(\delta')$ satisfying $\int^{d(\delta')}_{-d(\delta')} f_{\eta}(z)dz = \delta'$ can be computed as shown in Sec.~\ref{sec:errorbound}. After evaluating $\underline{\chi}_y$, the value of $\epsilon'$ can be found in a straightforward way as
\[\epsilon' = \frac{1}{1 + \frac{1}{e^{\epsilon} + \frac{\delta}{f_{\eta}(c(t) + d(\delta))}}\cdot\frac{q}{p}} - p\enspace,\]
\noindent where $\epsilon' < 0$ means that we could not satisfy $(\epsilon',\delta')$-GA for the given $\delta'$.

\textbf{If $\epsilon'$ is fixed}, then we need
\[\underline{\chi}_y \geq \frac{\delta}{\frac{p+\epsilon'}{1 - (p + \epsilon')} \cdot \frac{q}{p} - e^{\epsilon} }\enspace,\]
\noindent so we find $d$ such that $f_{\eta}(c(t) + d) = \underline{\chi}_y$ and compute $\delta' = \int^d_{-d} f_{\eta}(z)dz$. If $f_{\eta}(c(t)) \leq \frac{\delta}{\frac{p+\epsilon'}{1 - (p + \epsilon')} \cdot \frac{q}{p} - e^{\epsilon} }$, then such $d$ does not exist, meaning that we could not satisfy $(\epsilon',\delta')$-GA for the given $\epsilon'$.

\subsection{$(\epsilon,\delta)$-GA to $(\epsilon,\delta)$-DP}\label{sec:gatodp}

For our privacy analyses, we need to compute the converse: given an $(\epsilon',\delta')$ requirement on GA, find an appropriate $(\epsilon,\delta)$ for DP and generate the noise accordingly. The input can be given in terms of average guessing advantage $\Delta = 1 + (\epsilon' - 1) \cdot \delta'$, which gives freedom in choosing $\epsilon'$ and $\delta'$. We show that, as far as we can choose $\epsilon$ and $\delta$ ourselves, for some noise distributions (like Laplace) we can get $\delta' = 0$, which simplifies the solution and allows us to continue using the old definition of GA.

While intuitively $(\epsilon,\delta)$-DP means that pure $\epsilon$-DP is satisfied except some bad $y \in Y$ comes into play, this ``bad'' event probability is not necessarily $\delta$. There are different ways to define $(\epsilon,\delta)$-DP, discussed e.g. in~\cite{DBLP:journals/popets/SommerMM19}. We give transformations between DP and GA for the two main definitions.

\begin{definition}[Approximate DP~\cite{DBLP:journals/popets/SommerMM19}]\label{def:adp}
Let $\noised{q} : X \to Y$ be a probabilistic mechanism. We say $\noised{q}$ is $(\epsilon,\delta)$ differentially private if for all sets $S \subseteq Y$ and $x_0, x_1 \in X$ we have
\[ \pr{\noised{q}(x_0) \in S} \leq e^{\epsilon\cdot d(x_0,x_1)}\cdot\pr{\noised{q}(x_1) \in S} + \delta\enspace.\]
\end{definition}

\begin{definition}[Probabilistic DP~\cite{DBLP:journals/popets/SommerMM19}]\label{def:pdp}
Let $\noised{q} : X \to Y$ be a probabilistic mechanism. We say $\noised{q}$ is $(\epsilon,\delta)$-\emph{probabilistically} differentially private if for all $x_0, x_1 \in X$ there are sets $S^{\delta}_0, S^{\delta}_1 \subseteq Y$ with $\pr{\noised{q}(x_0) \in S^{\delta}_0} \leq \delta$ and $\pr{\noised{q}(x_1) \in S^{\delta}_1} \leq \delta$ s.t. for all sets $S \subseteq Y$ the following holds:
\begin{itemize}
\item $\pr{\noised{q}(x_0) \in S \setminus S^{\delta}_0} \leq e^{\epsilon\cdot d(x_0,x_1)}\cdot\pr{\noised{q}(x_1) \in S \setminus S^{\delta}_0}\enspace;$
\item $\pr{\noised{q}(x_1) \in S \setminus S^{\delta}_1} \leq e^{\epsilon\cdot d(x_0,x_1)}\cdot\pr{\noised{q}(x_0) \in S \setminus S^{\delta}_1}\enspace.$
\end{itemize}
\end{definition}

\paragraph{Achieving $(\epsilon,\delta)$-GA for Probabilistic DP.} It is easier to achieve $(\epsilon,\delta)$-GA for Definition~\ref{def:pdp}. Let $y \in Y$ be such that $\pr{y|x} \leq \pr{y|x'}e^{\epsilon}$ for all $x,x'$. By Definition~\ref{def:pdp}, this happens for at least $(1-\delta)$ of values $y \in Y$, so we now use the same $\delta$ in GA and DP definitions. For those $y$, we compute $\epsilon$ in the same way as we did it for $\epsilon$-DP. The desired guessing advantage is achieved with probability $1 - \delta$.

\paragraph{Achieving $(\epsilon,\delta)$-GA for Approximate DP.}  For Definition~\ref{def:adp}, we do not have a direct conversion to probability of failure, and we need to compute it ourselves, by estimating the noise distribution. Let $y \in Y$ be arbitrary. Applying definition of $(\epsilon,\delta)$-DP, we get $\underline{\chi}_y = \inf_{x \in \hat{X'}}f_Y(y|x) \geq (\sup_{x \in \hat{X'}}f_Y(y|x) -\delta)\cdot e^{-\epsilon}$, where $\sup_{x \in \hat{X'}}f_Y(y|x) =: \overline{\chi}_y$ is a strictly positive quantity.

\begin{eqnarray*}
\prpost{X'} & \leq & \frac{1}{1 + \frac{1}{e^{\epsilon} + \frac{\delta}{\inf_{x \in \hat{X'}}f_Y(y|x)}}\cdot\frac{q}{p}}\\
& \leq & \frac{1}{1 + \frac{1}{e^{\epsilon} + \frac{\delta}{e^{-\epsilon}(\sup_{x \in \hat{X'}}f_Y(y|x) - \delta)}}\cdot\frac{q}{p}}\\
& = & \frac{1}{1 + \frac{1}{e^{\epsilon} + \frac{\delta}{e^{-\epsilon}(\overline{\chi}_y - \delta)}}\cdot\frac{q}{p}} = \frac{1}{1 + e^{-\epsilon}\cdot\frac{1}{1 + \frac{\delta}{\overline{\chi}_y - \delta}}\cdot\frac{q}{p}}\\
& = & \frac{1}{1 + e^{-\epsilon}(1 - \frac{\delta}{\overline{\chi}_y})\cdot\frac{q}{p}}\enspace.
\end{eqnarray*}

If $\delta \geq \overline{\chi}_y$, then we cannot come up with $\epsilon > 0$, so we need $\delta$ to be small enough. The problem here is that $\epsilon$ and $\delta$ are correlated in a certain way. If we fixed $\delta = 0$, then we could find $\epsilon$ in the same way as we did it in D1.7, but it may happen that $(\epsilon,\delta)$-DP can be achieved only if infinite noise is used. The problem is that $\epsilon$ in turn depends on $\delta$, and to optimize them simultaneously, we need to know more about their correlation. Also, the quantity $\overline{\chi}_y$ depends on both $\epsilon$ and $\delta$, and we need to know the noise distribution to find a non-trivial lower bound on $\overline{\chi}_y$.


Using results of~\cite{dersens}, for Laplace noise based on $\beta$-smooth derivative sensitivity, we need to take $\delta = 2e^{\epsilon-1-\frac{\epsilon-b}{\beta}}$, where $b + \beta \leq \epsilon$. For a fixed $b$, we get $\overline{\chi}_y = \frac{b}{2\cdot c_{\beta}(t)}$, where $c_{\beta}(t)$ is $\beta$-smooth derivative sensitivity of the query. We get
\[\frac{1}{1 + e^{-\epsilon}(1 - \frac{4e^{\epsilon-1 - \frac{\epsilon - b}{\beta}}\cdot c_{\beta}(t)}{b})\cdot\frac{q}{p}}\enspace.\]
To get a reasonable $\epsilon$, we need $(1 - \frac{4e^{\epsilon-1 - \frac{\epsilon - b}{\beta}}\cdot c_{\beta}(t)}{b}) > 0$. If we take $\beta \leq \frac{\epsilon - b}{\epsilon - 1 - \ln(\frac{b}{4})}$ (note that $\beta > 0$ as $b < \epsilon$), then we get

\[\frac{4e^{\epsilon-1 - \frac{\epsilon - b}{\beta}}\cdot c_{\beta}(t)}{b} \leq \frac{4e^{\ln(\frac{b}{4})}\cdot c_{\beta}(t)}{b} = c_{\beta}(t)\enspace.\]

If $c_{\beta}(t) < 1$, then we are done. In general, we can proceed and decrease $\beta$ as much as we like. If $c_{\beta}(t) \leq C$ for some constant $C \leq \infty$. we need to take $\beta \leq \frac{\epsilon - b}{\epsilon - 1 - \ln(\frac{b}{4C})}$. If there is no global upper bound on derivative sensitivity, we can see it as $C = \infty$, which gives $\beta = 0$ and results in infinite noise. In the latter case, there is still possibility that we can get finite noise, and a possible approach is to try out different values of $C'$ using window binary search and checking whether obtained $\beta$ satisfies $c_{\beta}(t) \leq C'$. This can be useful also if $C < \infty$, just to check if a better solution exists.

Since there are several parameters to optimize, we now describe a particular procedure of finding appropriate parameters.
\begin{itemize}
\item Let $\epsilon$ be the parameter we would get from $\epsilon$-DP (i.e. for $\delta = 0$).

\item Let $\beta$ and $b$ be the optimal values for \emph{Cauchy noise}, found using search over $\beta$ as described in~\cite{dersens}. Note that these will not necessary be optimal for \emph{Laplace noise}, but they give a fair starting point.

\item Assume that $e^{-\epsilon}$ in GA formula is multiplied by some $e^{-\alpha} \leq 1$. This means that, if $\epsilon$ would be the parameter we got for $\epsilon$-DP, we would need $\epsilon' = \epsilon - \alpha$ for $(\epsilon',\delta)$-DP. This sets a constraint $\alpha \leq \epsilon$. Now $\alpha \in [0,\epsilon]$ is a parameter that can be optimized.

\item We need to find $b'$ such that $1 - \frac{4e^{\epsilon-\alpha-1 - \frac{\epsilon-\alpha-b'}{\beta}}\cdot c_{\beta}(t)}{b'} \geq e^{-\alpha\epsilon}$. For this, we need to satisfy the equation $\frac{e^{b'/\beta}}{b'/\beta} \leq \frac{(1-e^{-\alpha})\beta}{4c_{\beta}(t)\cdot e^{\epsilon-\alpha-1-(\epsilon-\alpha)/\beta}}$. The solution that satisfies the \emph{equality} is $b' = - \beta\cdot W(-\frac{4c_{\beta}(t)\cdot e^{\epsilon-\alpha-1-(\epsilon-\alpha)/\beta}}{(1-e^{-\alpha})\beta})$, where $W$ is Lambert's $W$ function. By definition, $W(y) = x$ such that $x \cdot e^{x} = y$, and since $\frac{e^{x}}{x} = \frac{-1}{e^{-x}(-x)}$, the equation $e^{x} / x = y$ can be solved as $x = - W(-\frac{1}{y})$. If $\beta = 0$, we will have division by $0$, but in this case we can compute directly $\delta = 0$, which gives us $\alpha = 0$ and $b' = \epsilon - \beta$.

While using Lambert's W gives a correct solution, it does not necessarily give us the best noise. We need to satisfy the \emph{inequality} $1 - \frac{4e^{\epsilon-\alpha-1 - \frac{\epsilon-\alpha-b'}{\beta}}\cdot c_{\beta}(t)}{b'} \geq e^{-\alpha\epsilon}$, and we may get better results for values of $b$ that make this inequality non-strict. Hence, it makes sense to optimize the parameter $b'$ by searching through $b' \in [0,\ldots,\epsilon-\alpha-\beta]$.

\item So far, $\alpha$ is a free parameter that can be optimized. We do it by searching through $\alpha \in [0,\epsilon]$.

\item The noise level $\frac{c_{\beta}(t)}{b'}$ that we get in this way gives us the desired bound on guessing advantage.
\end{itemize}

\end{document}